\documentclass[a4paper,UKenglish,cleveref, autoref, thm-restate,authorcolumns]{lipics-v2021}

\hideLIPIcs  %uncomment to remove references to LIPIcs series (logo, DOI, ...), e.g. when preparing a pre-final version to be uploaded to arXiv or another public repository

\nolinenumbers %uncomment to disable line numbering

\title{Aspects of Coherence in Dependence Logic}

\author{Timon Barlag}{Leibniz Universität Hannover, Germany}{barlag@thi.uni-hannover.de}{https://orcid.org/0000-0001-6139-5219}{}

\author{Nicolas Fröhlich}{Leibniz Universität Hannover, Germany}{nicolas.froehlich@thi.uni-hannover.de}{https://orcid.org/0009-0003-5413-1823}{The author appreciates funding by the German Research Agency (DFG) under the grant ME2479/3-1 and project id 511769688}

\author{Miika Hannula}{University of Tartu, Estonia\\University of Helsinki, Finland}{miika.hannula@ut.ee}{https://orcid.org/0000-0002-9637-6664}{The author appreciates funding by the ERC grant 101020762}

\author{Phokion G.\ Kolaitis}{ University of California Santa Cruz,  USA}{kolaitis@ucsc.edu}{https://orcid.org/0000-0002-8407-8563}{The author acknowledges support from the University of Helsinki during a two-month visit in the spring of 2026.}

\author{Juha Kontinen}{University of Helsinki, Finland}{juha.kontinen@helsinki.fi}{https://orcid.org/0000-0003-0115-5154}{}

\author{Arne Meier}{Leibniz Universität Hannover, Germany}{meier@thi.uni-hannover.de}{https://orcid.org/0000-0002-8061-5376}{The author appreciates funding by the German Research Agency (DFG) under the grant ME2479/3-1 and project id 511769688}

\author{Jouko Väänänen}{University of Helsinki, Finland}{jouko.vaananen@helsinki.fi}{https://orcid.org/0000-0003-4356-7974}{This author has received funding from the European Research Council (ERC) under the European Union’s Horizon 2020 research and innovation programme (grant agreement No 101020762), and from the Research Council of Finland (grant No. 368671).}

\authorrunning{T. Barlag et al.} %TODO mandatory. First: Use abbreviated first/middle names. Second (only in severe cases): Use first author plus 'et al.'

\ccsdesc[500]{Theory of computation~Problems, reductions and completeness}
\ccsdesc[500]{Theory of computation~Higher order logic}

\keywords{Computational complexity, coherence} %TODO mandatory; please add comma-separated list of keywords

\usepackage{booktabs}

\newcommand{\itemstyle}[1]{{\color{lipicsGray}\normalfont\bfseries\sffamily{#1}}}

\usepackage{xspace}

\usepackage{mathtools,amssymb,amsmath}
\usepackage{hyperref}
\usepackage{cleveref}
\newcommand{\calA}{\mathcal{A}}
\newcommand{\calB}{\mathcal{B}}

\newcommand{\VAR}{\mathrm{VAR}}
\newcommand{\dep}{\mathrm{dep}}
\newcommand{\DL}{\mathcal{D}}

\newcommand{\coRE}{\mathsf{coRE}}
\newcommand{\RE}{\mathsf{RE}}
\newcommand{\NP}{\mathsf{NP}}

\newcommand{\LOGSPACE}{\mathsf{LOGSPACE}}
\newcommand{\NLOGSPACE}{\mathsf{NLOGSPACE}}

\newcommand{\NEXPTIME}{\mathsf{NEXPTIME}}
\newcommand{\coNEXPTIME}{\mathsf{coNEXPTIME}}

\newcommand{\PiE}[1]{\Pi^\mathsf{EXP}_{#1}}

\newcommand{\UDL}{\forall\DL}
\newcommand{\FO}{\mathrm{FO}}
\newcommand{\DLQF}{\DL\mathrm {QF}}
\newcommand{\oN}{{\mathbb{N}}}
\newcommand{\SNP}{\mathrm {SNP}}
\newcommand{\PDL}{\mathcal{PD\hspace*{-0.1em}L}}
\newcommand{\kCOHERENCE}[1][$k$]{#1-{\sc Coherence}\xspace}
\newcommand{\COHERENCE}{{\sc Coherence}\xspace}
\newcommand{\tmodels}[3]{{#1} \models_{#2} {#3}}
\newcommand{\tnotmodels}[3]{{#1} \not\models_{#2} {#3}}

\newcommand{\leqbtt}{\preceq^p_{b\text-tt}}

\newcommand{\Rel}{\operatorname{Rel}}

\newcommand{\commentout}[1]{{}}
\usepackage[most]{tcolorbox}
\newtcolorbox{mytodo}[2][]{
  arc=5mm,
  lower separated=false,
  enhanced,
  breakable,
  title={#2}, % Set the title explicitly
  #1 % Allow additional optional parameters
}

\begin{document}

\maketitle
\begin{abstract}
%%%%
Dependence logic extends first-order logic with dependence atoms asserting that the value of a variable is determined by the values of certain other variables.
%, thus  dependence atoms express functional dependencies in database theory.
The semantics of dependence logic has a second-order character and involves  sets of assignments, called teams, instead of individual assignments as in the classical Tarski semantics. Since the model-checking problem  is known to be $\NP$-complete even for quantifier-free dependence logic (DQF) formulas, researchers have pursued conditions on formulas that make this problem tractable.
In 2010, Jarmo Kontinen introduced the notion of $k$-coherence for dependence logic formulas, where $k$ is a positive integer. This notion asserts  that if the formula is satisfied in a structure by all $k$-element subteams of a given team, then the given team itself satisfies the formula.  It has been proved that $k$-coherent DQF-formulas have a tame model-checking problem, because such formulas admit a first-order rewriting.
In this paper, we investigate the structural
and algorithmic aspects of coherence. We  show that if a DQF-formula is first-order rewritable, then it is $k$-coherent for some positive integer $k$. Thus, for DQF-formulas, coherence is equivalent to first-order rewritability. Furthermore, we show that an analogous result holds for universally quantified dependence logic formulas under  a stronger notion of coherence. After this, we focus on the complexity of deciding if a given dependence logic formula is $k$-coherent. We establish that this decision problem is highly undecidable for arbitrary dependence logic formulas, while for DQF-formulas this problem is co-recursively enumerable. Furthermore, we pinpoint the computational complexity of the coherence problem for propositional dependence logic formulas by showing that this problem is complete for the second level of the exponential hierarchy.

\end{abstract}

\section{Introduction}
Team semantics is a  framework for studying concepts and phenomena that inherently involve collections of objects rather than individual ones.
%\phokion{Is it accurate to say that  team semantics is ``a logical framework''? Do not we need both syntax and semantics to talk about a "logical framework"? Also, "logical framework" is a technical term in type theory. Let's discuss.}\jouko{I changed it to just ``framework".}
 Functional dependence in database theory and conditional independence of random variables in statistics are examples of such concepts.
Team semantics was introduced by Hodges~\cite{hodges1997compositional} to give  semantics to independence friendly logic and by Väänänen~\cite{Vaananen-book} to give semantics to dependence logic.
In these logics,  formulas are evaluated over sets of assignments, called \emph{teams}, rather than over single assignments, as in the classical Tarski semantics. Here, our  focus is on dependence logic.
The syntax of  dependence logic $\DL$ extends that of first-order logic $\FO$ with new atomic formulas $\dep(\vec x,y)$, called \emph{dependence atoms}; these atoms
 assert that the values of the variables $\vec x$  determine the value of the variable~$y$, hence they   express  functional dependencies in database theory.
The notion of satisfaction $\calA \models_T \varphi$, where $\varphi$ is a $\DL$-formula, involves both a structure $\calA$ and a team $T$. While the syntax of $\DL$ resembles that of $\FO$, its semantics is second-order. For example, $\calA \models_T \varphi_1 \lor \varphi_2$ if there are two subteams $T_1$ and $T_2$ of $T$ such that $T=T_1\cup T_2$,
$\calA  \models_{T_1}  \varphi_1$, and $\calA \models_{T_2} \varphi_2$.

In the past decade, the expressive power and computational complexity of dependence logic and related team-based logics have been studied extensively (see, e.g., \cite{DurandKRV22,KontinenMM22,FKM2026,DurandKV24,10.1145/3771721}).
Furthermore, connections to several different areas have been established, including database theory~\cite{HannulaKL16,HannulaK16,Engstrom12}, meta-finite model theory~\cite{HannulaKBV20}, and inquisitive logic~\cite{CiardelliIY20}.
The framework of team semantics has also been generalized in various directions, including  multiteams~\cite{DurandHKMV18}, probability distributions~\cite{HKMV18}, temporal logics~\cite{DBLP:conf/lics/GutsfeldMOV22,DBLP:conf/mfcs/KrebsMV018}, and semiring annotations~\cite{BarlagHKPV23}.

The model-checking problem for a $\DL$-formula $\varphi$ asks: given a structure $\calA$
and a team $T$, does $\calA \models_T \varphi$?
Because of the second-order character of the semantics of $\DL$, the model-checking problem for $\DL$-formulas is
 drastically different than that for $\FO$-formulas. Specifically, while the model-checking problem for $\FO$-formulas is always in $\LOGSPACE$, the model-checking problem for $\DL$-formulas can be $\NP$-complete.
In fact, as shown by Jarmo Kontinen~\cite{JaKontinen13}, it is $\NP$-complete even for the quantifier-free formula $\DL$-formula $\dep(x,y) \vee \dep(u,v) \vee \dep(u,v)$.

 This state of affairs has motivated the pursuit of syntactic patterns and model-theoretic conditions on $\DL$-formulas, so that the model-checking problem becomes tractable and, ideally,  no more difficult than the model-checking of $\FO$-formulas.
 As regards syntactic patterns, it was shown by Jarmo Kontinen~\cite{JaKontinen13} that the model-checking problem for disjunctions of two dependence atoms is always in $\NLOGSPACE$.
 More recently, a complete classification of the complexity of the model-checking problem for disjunctions of two dependence atoms was established by Fröhlich~et~al.~\cite{FKM2026}.

As regards model-theoretic conditions, Jarmo Kontinen~\cite{JarmoThesis,JaKontinen13} introduced and investigated the notion
of $k$-coherence, where $k$ is a positive integer. A $\DL$-formula $\varphi$ is $k$-coherent if
for every structure  $\calA$ and every team $T$ such that $\calA\models_T' \varphi$ holds for every $k$-element subteam $T'$ of $T$, it holds that $\calA \models \varphi$. Thus, $k$-coherence is a ``small-model'' property, since it is equivalent to asserting that
if $\calA\not\models_T \varphi$, then there is a $k$-element subteam $T'$ of $T$ such that $\calA\not \models_{T'}\varphi$.
Jarmo Kontinen showed~\cite{JarmoThesis,JaKontinen13} that if $\varphi$ is a  $k$-coherent quantifier-free $\DL$-formula, then the model-checking problem for $\varphi$ is well-behaved, in the sense that it admits a $\FO$-\emph{rewriting}, i.e., there is a $\FO$-formula $\varphi^*(R)$ with an extra relation symbol $R$ such that for every structure $\calA$ and every team $T$, we have that
$\calA \models_T\varphi$ if and only if
$(\cal A, \Rel(T)) \models \varphi^*(R)$, where $\Rel(T)$ is the relation represented by the team $T$; furthermore, $\varphi^*$ can be taken to be a universal $\FO$-formula. Jarmo Kontinen also showed~\cite{JarmoThesis} that this $\FO$-rewritability result extends to \emph{strongly} $k$-coherent universal $\DL$-formulas.

%Interestingly, $k$-coherence also plays an important role in inquisitive logic, where it can be used to characterize fragments of inquisitive first-order logic with favourable meta-theoretical properties \cite{CIARDELLI2022103155}.

%It is worth noting that, although inquisitive first-order logic satisfies a form of downward closure, it differs substantially from dependence logic, both in its semantics, expressivity and in the choice of primitive logical operators (for details, see the work of Kontinen and Viardelli~\cite{KC26}).

%\subparagraph*{Summary of Results}
In this paper, we embark on a systematic study of structural and algorithmic aspects of coherence and its variants for dependence logic. As discussed in the preceding paragraph, coherence of a quantifier-free $\DL$-formula is a sufficient condition for $\FO$-rewritability. Fröhlich~et~al.~\cite{FKM2026} left as an open problem whether the converse is true, i.e., does $\FO$-rewritability imply $k$-coherence for some $k\geq 1$? Using
 the Łoś-Tarski preservation theorem of first-order logic~\cite{MR72071,MR66303}, we show that this is indeed true.  Therefore, for a quantifier-free $\DL$-formula $\varphi$, the following statements are equivalent: (i) $\varphi$ is $k$-coherent for some $k\geq 1$; (ii) $\varphi$ is
 rewritable to a universal $\FO$-formula; (iii) $\varphi$ is $\FO$-rewritable.
 We then consider universal $\DL$-formulas and show that an entirely analogous rewritability result holds for stongly coherent universal $\DL$-formulas.
 While for quantifier-free formulas $\DL$-formulas strong coherence and coherence coincide, this no longer holds true for universal $\DL$-formulas; to this effect,
  we give an example of a universal $\DL$-formula that is $1$-coherent, but not $\FO$-rewritable.

%We then turn to study the possibility of generalizing the aforementioned result for larger classes of dependence logic formulas.
%Recall {\color{red}M: Or, ``It is known''} that all formulas of dependence logic can be transformed to the form $\forall \vec{x}\exists \vec{y}\theta$, where $\theta$ is a quantifier-free formula.
%Here, we show that the results from before can be extended to universal dependence logic formulas by means of the notion of strong coherence (which implies ordinary coherence).
%Eventually, we generalize $k$-coherence to $\lambda$-coherence, where $\lambda$ is an infinite cardinal, and show that universal formulas are $\omega$-coherent, whereas existential quantification enables the construction of formulas that are not $\lambda$-coherent for any infinite cardinal $\lambda$.
%Interestingly, a corresponding phenomenon regarding existential formulas in inquisitive first-order logic has also been observed by Ciardelli and Grilletti~\cite{CIARDELLI2022103155}.

After completing the investigation of  the structural aspects of coherence, we turn attention to its algorithmic aspects. Specifically, we consider the problem of deciding coherence, that is, given a $\DL$-formula $\varphi$, is $\varphi$ $k$-coherent? Our first result shows that, for every fixed $k\geq 2$, deciding $k$-coherence for arbitrary $\DL$-formulas is a problem of non-arithmetic complexity, which implies that it is  a highly undecidable problem. For quantifier-free $\DL$-formulas, however, we show that deciding $k$-coherence is a co-recursively enumerable problem, even when $k$ is part of the input. For this result, we use  the compactness theorem for first-order logic to show that, for quantifier-free $\DL$-formulas, $k$-coherence coincides with $k$-coherence in the finite, i.e., it suffices to consider finite structures and finite teams.
The decidability of coherence for quantifier-free  $\DL$-formulas remains an open problem.
Finally, we explore coherence for propositional dependence logic, a variant of
dependence logic that has been studied earlier~\cite{DBLP:journals/iandc/Virtema17,YangV17,Hannula19}.
Our main complexity-theoretic result is that, for every fixed $k\geq 2$, deciding
$k$-coherence for propositional dependence logic formulas is a $\PiE{2}$-complete problem via polynomial-time  bounded-truth-table reductions, where  $\PiE{2}$ is
 the second level of the exponential hierarchy.

% \phokion{Note that I did not mention the $\coRE$-ness of coherence for universal $\DL$-formulas because coherence is not enough for $\FO$-rewritability, so this result may feel unmotivated.  Let's discuss.}

\subparagraph*{Structure of the paper}
Section~\ref{sec:preliminaries}  provides background to dependence logic $\DL$.
Section~\ref{sec:coherence and rewritability} studies coherence and $\FO$-rewritability for quantifier-free $\DL$-formulas.
Section~\ref{sec:strong coherence} investigates coherence for quantified $\DL$-formulas.
Section~\ref{sec:complexity} examines the algorithmic aspects of coherence.

\subparagraph*{Related work}
We note that $k$-coherence also plays an important role in inquisitive logic, where it is used to characterize fragments of inquisitive first-order logic with favourable model-theoretical properties~\cite{CIARDELLI2022103155}.
% The notion of $k$-coherence has been also used in temporal team semantics to identify decidable fragments of temporal team logics. Furthermore, deciding $k$-coherence has been shown to be undecidable also in this context \cite{VirtemaHFK021}.
The notion of $k$-coherence has also been used in modal dependence logic~\cite{DBLP:conf/aiml/HellaLSV14} and in temporal team logics~\cite{VirtemaHFK021}.

\section{Preliminaries}\label{sec:preliminaries}

%\arne{define vocabularies, domain of $\calA$, $\VAR$, ...}
%\arne{Define only our relevant stuff and not the full logic?}
We make use of standard notions in complexity theory~\cite{DBLP:books/daglib/0018514}.
A \emph{vocabulary} is a set of relation and constant symbols.
%tuple $\tau = (R_1, \dots, R_m)$ of relation symbols.
A \emph{$\tau$-structure}
 $\calA$ % = (A, R_1^{\calA}, \dots R_m^{\calA})$,
 comprises of a universe $A$ and of an $r$-ary relation $R^{\calA}$ and  element $c^{\calA}$ of $A$  %is a  relation (element) on $A$,
 for each $r$-ary relation symbol $R$ and constant symbol $c$ in $\tau$. %$1 \leq i \leq m$.

Let $\VAR$ be a countably infinite set of variables.
A \emph{dependence atom} is an expression of the form $\dep(\vec y,z)$, where $\vec y = y_1,\ldots,y_n$ and $y_1,\ldots,y_n,z \in \VAR$. % are distinct variables.
% Let $\tau$ be a relational vocabulary.
The formulas of dependence logic $\DL$ over $\tau$ are defined by the following grammar:
\[
	\psi \coloneqq
    R_i(\vec x_i)\mid
    \lnot R_i(\vec x_i)\mid
    x_k = x_\ell \mid
     \lnot x_k = x_\ell \mid
    \dep(\vec y, z) \mid
    \psi \land \psi \mid
    \psi \lor \psi \mid
    \forall x \psi \mid
    \exists x \psi,
\]
where $x_k,x_\ell,\vec x_i$ are (not necessarily distinct) constants or variables in $\VAR$, $R_i$ is a relation symbol in $\tau$, and $y_1,\ldots,y_n,z$ are distinct variables in $\VAR$.
We write $\DL(\tau)$ to denote the collection of all  formulas of dependence logic $\DL$ over $\tau$. When $\tau=\emptyset$, we write just $\DL$.

\begin{definition}
    Let $V$ be a set of variables (i.e., $V\subseteq \VAR$) and let $A$ be a set.
    \begin{itemize}
    \item An \emph{assignment with domain $V$ and range $A$}
    is a mapping $s\colon V \to A$.

    \item A \emph{team with domain $V$ and range $A$} is a set $T$ of assignment with domain $V$ and range $A$.
    We will use the notation $\Rel(T) \coloneqq \{\,(s(x_1), \dots, s(x_n)) \mid s \in T\,\} \subseteq A^n$, for $x_1, \dots, x_n \in V$, when $T$ is considered as a relation instead of a set of assignments.
    % We will write $\Dom(T)$ to denote the domain $V$ of the team $T$.
    % Similar $\Range(T)$ denotes the range $A$ of the team $T$.
    % \item If $T$ is a team with domain $V$ and range $A$ and $B \subseteq A$, then the \emph{restriction of $T$ on $B$} is the team $T \upharpoonright B = \{\,s \mid \bigcup_{x \in V} s(x) \subseteq B\,\}$.
    % Likewise if $U \subseteq V$, then the \emph{restriction of $T$ on $U$} is the team $T\upharpoonright U = \{\,s\upharpoonright U \mid s\in T\,\}$, where $s\upharpoonright U$ is the restriction of the assignment $s$ to $U$.
    \end{itemize}
\end{definition}

%\juha{we should not assume that s is defined on all of VAR}
%\nicolas{changed it}

The next definition describes some operations on teams that will be used to give semantics to the universal and existential quantifiers of dependence logic.

\begin{definition}
    Let $V$ be a set of variables (i.e., $V\subseteq \VAR$) and let $A$ be a set.
    \begin{itemize}
    \item If $s\colon V \to A$ is an assignment with domain $V$ and range $A$, $x$ is a variable, and $a$ is an element of $A$, then $s^x_a\colon V \cup \{x\} \to A$ is the assignment such that $s^x_a(x) = a$ and $s^x_a(y) = s(y)$, for every variable $y\not = x$.

    \item If $T$ is a team with domain $V$ and range $A$, $f\colon T \to \mathcal{P}(A) \setminus \{\emptyset\}$ is a function, and $x$ is a variable, then the \emph{supplement team} $T^x_f$ is the team $ \{\,s^x_{a} \mid s \in T\text{ and }a\in f(s)\,\}$, with extended domain $V \cup \{x\}$.

    \item If $T$ is a team with domain $V$ and range $A$, and $x$ is a variable, then the \emph{duplicate team} $T^x_A$ is the team $ \{\,s^x_a \mid s \in T \text{ and } a \in A \,\}$, with extended domain $V \cup \{x\}$.
    \end{itemize}
\end{definition}

We are now ready to give the definition of the semantics of dependence logic formulas.

% \juha{We are now usign the so-called strict interpretation for the existential quantifier. It is of course equivalent but might cause confusion.}
% \arne{Mabye we should then add a sentence for clarification in this direction to prevent obvious reviewer comments?}
% \miika{Currently Defs 1-4 in a row. Would it look better to take e.g. Defs 1,2,4 out of definition environments?}
% \arne{With 1+2 out of def env i am fine. I think 4 should stay inside}
\begin{definition}\label{def:semantics}
	Let $\tau
    % =(R_1,\ldots,R_m)
    $ be a vocabulary and let $\calA
    % =(A,R_1^\calA,\ldots,R_m^\calA)
    $ be  a $\tau$-structure.
    The satisfaction relation $\tmodels{\calA}{T} \varphi$, where $T$ is a team with domain $\VAR$ and range $A$, and $\varphi$ is a formula of dependence logic $\DL$, is defined by induction on the construction of $\varphi$ and simultaneously for all teams with domain $\VAR$ and range $A$ as follows, where $\vec y = y_1,\ldots,y_n$, and $x_1,\dots,x_r,x_k,x_\ell\in\VAR$:
    \begin{tabbing}
$\mathcal A\models_TR(x_1,\dots,x_r)$\quad\=\;if\;\=\kill
$\tmodels{\calA}{T}{R(x_1,\dots,x_r)}$ \> if \> for all $s\in T$, we have $(s(x_1),\ldots,s(x_{r})) \in R^{\calA}$;\\
$\tmodels{\calA}{T}{\neg R(x_1,\dots,x_r)}$ \> if \> for all $s\in T$, we have $(s(x_1),\ldots,s(x_{r})) \not\in R^{\calA}$;\\
$\tmodels{\calA}{T}{x_k= x_\ell}$ \> if \> for all $s\in T$, we have $s(x_k) = s(x_\ell)$;\\
$\tmodels{\calA}{T}{\neg x_k= x_\ell}$ \> if \> for all $s\in T$, we have $s(x_k) \neq s(x_\ell)$;\\
$\tmodels{\calA}{T}{\dep(\vec y,z)}$ \> if \> for all $s_1,s_2\in T$: $s_1(y_j)=s_2(y_j)$ for all $j$ implies $s_1(z)=s_2(z)$;\\
$\tmodels{\calA}{T}{\varphi_1\land \varphi_2}$ \> if \> $\tmodels{\calA}{T}{\varphi_1}$ and $\tmodels{\calA}{T}{\varphi_2}$;\\
$\tmodels{\calA}{T}{\varphi_1\lor \varphi_2}$ \> if \> there are teams $T_1$, $T_2$ with $T= T_1\cup T_2$,  $\tmodels{\calA}{T_1}{\varphi_1}$, $\tmodels{\calA}{T_2}{\varphi_2}$;\\
$\tmodels{\calA}{T}{\exists x\varphi}$ \> if \> there is a function $f\colon T\to A$ such that $\tmodels{\calA}{T^x_f}{\varphi}$,\\
\>\> where $T^x_f$ is the supplement team associated with $T$, $x$, and $f$;\\
$\tmodels{\calA}{T}{\forall x\varphi}$ \> if \> $\tmodels{\calA}{T^x_A}{\varphi}$, where $T^x_A$ is the duplicate team associated with $T$ and $x$.
\end{tabbing}
\end{definition}
%\arne{mention locality and that we are only interested in very specific formulas, so we can get rid of $\mathcal A$}

%Let $\mathcal C\subseteq\DL$ be a subset of the formulas from $\DL$.
%Then we will call formulas from $\mathcal C$ also $\mathcal C$-formulas.

%If $\Sigma$ is a set of $\DL$-formulas and $\varphi$ is a $\DL$-formula, then we say that $\varphi$ is \emph{implied by} $\Sigma$, $\Sigma\models\varphi$ in symbols, if for all teams $T$ we have that $T\models\Sigma$ implies $T\models\varphi$.

%\problemdef{$\IMP(\mathcal C)$ --- the implication problem for formulas of the class $\mathcal C$}{a set $\Sigma$ of $\mathcal C$-formulas, a $\mathcal C$-formula $\varphi$}{Is $\Sigma\models\varphi$ true}

In this article, we consider the following fragments of dependence logic.

\begin{definition} Let $\tau$ be a relational vocabulary.
    \begin{itemize}
        \item We write $\UDL(\tau)$ to denote the collection of $\DL$-formulas in which $\exists$ does not occur.
        % Equivalently, these are the formulas defined as follows:
        % \[
        % 	\psi \coloneqq
        %     R_i(\vec x_i)\mid
        %     \lnot R_i(\vec x_i)\mid
        %     x_k = x_\ell \mid
        %     \dep(\vec y, z) \mid
        %     \psi \land \psi \mid
        %     \psi \lor \psi \mid
        %     \forall x \psi.
        % \]
        \item We write $\DLQF(\tau)$ for the collection of all quantifier-free formulas of $\DL$ over $\tau$.
    \end{itemize}
    For the case of $\tau=\emptyset$, we write just $\UDL$, and $\DLQF$.
\end{definition}

The next result is a well-known normal form theorem for dependence logic.

\begin{theorem}[\cite{Vaananen-book}]
The following statements are true:
\begin{itemize}
    \item Every $\DL(\tau)$-formula is logically equivalent to a formula of the form
    \[
        \forall \vec{x}\exists \vec{y}\big(\bigwedge_i \dep(\vec{z}_i,y_i) \wedge \theta\big),
    \]
    where $\theta$ is a quantifier-free first-order formula over $\tau$.
    \item Every $\UDL(\tau)$-formula is logically equivalent to a formula of the form $\forall \vec{x} \theta$, where $\theta\in \DLQF(\tau)$.
\end{itemize}
\end{theorem}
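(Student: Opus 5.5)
We prove the two items separately, both by induction on the structure of the formula, following the standard route. The first item is the main work; the second will follow by the same method restricted to the existential-free case.

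For the first item, we first push all quantifiers outward. The prenex operations that are valid under team semantics are the basic tool. If $x$ does not occur free in $\psi$, then $\forall x\varphi\wedge\psi \equiv \forall x(\varphi\wedge\psi)$ and $\exists x\varphi\wedge\psi\equiv\exists x(\varphi\wedge\psi)$, and likewise for $\lor$ with $\exists$. These are checked directly from Definition~\ref{def:semantics} using locality, i.e., that satisfaction depends only on the values of the free variables. After renaming bound variables apart, this is routine for $\wedge$ and for $\exists$ over $\lor$.

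The genuinely problematic case is $\forall x\varphi\lor\psi$, since $\forall$ does not commute with team-splitting disjunction. Here we will use the standard trick of simulating a split by fresh existentially quantified variables with constant values. Introduce $\exists y_0\exists y_1$ with $\dep(\emptyset,y_0)$ and $\dep(\emptyset,y_1)$, plus a fresh $\exists z$ choosing either $y_0$ or $y_1$, i.e., $z=y_0\lor z=y_1$. The disjuncts are then relativized by $z=y_0$ and $z=y_1$. The case of a one-element universe needs separate handling, which is trivial since there all teams are essentially constant.

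After prenexing we have a prefix mixing $\forall$ and $\exists$. We then move every $\exists y$ to the right of the universals, Skolem-style. Replace $\exists y\forall x\,\chi$ by $\forall x\exists y'\,(\dep(\vec u,y')\wedge\chi[y'/y])$, where $\vec u$ are the variables quantified before $y$ together with the free variables. The dependence atom forces $y'$ not to depend on $x$, which recovers the original semantics. The conversion of inner dependence atoms $\dep(\vec z,w)$ in the matrix into the form $\bigwedge_i\dep(\vec z_i,y_i)\wedge\theta$, with $\theta$ first-order, goes the same way. We existentially quantify a fresh $y$ equal to $w$ with $\dep(\vec z,y)$. Disjunctions and other positions of atoms are moved out with the same constant-splitting trick. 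Making this elimination of dependence atoms inside disjunctions correct is the step I expect to be the main obstacle. For it I would cite Väänänen's book proof rather than reproduce it in full. After these steps the matrix is first-order, so flatness gives the claimed form.

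For the second item, we only need the prenex rules for $\forall$ over $\wedge$ and over $\lor$. The $\lor$ case is the obstacle above, and the constant-splitting trick introduces $\exists$, which is not allowed here. I would therefore instead verify the equivalence $\forall x\varphi\lor\psi\equiv\forall x(\varphi\lor\psi)$ for $x$ not free in $\psi$. In the downward-closed setting I would check it directly: a split of the duplicated team can be projected back using downward closure, choosing each original assignment to go to the $\psi$-side if any of its copies does. If this check fails, I would fall back to the normal form stated in the cited book.
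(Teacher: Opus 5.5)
The paper gives no proof of this theorem. It only cites Väänänen's book, so your argument has to stand on its own.

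\textbf{Second item.} Your argument for this item is correct and complete; the check you propose succeeds. Suppose $x$ is not free in $\psi$ and $T^x_A=Y\cup Z$ with $Y\models\varphi$ and $Z\models\psi$. Put $s\in T$ into $T_2$ if some $s^x_a$ lies in $Z$, and into $T_1$ otherwise. Then $(T_1)^x_A\subseteq Y$, so $T_1\models\forall x\varphi$ by downward closure, and $T_2\models\psi$ by locality. The other direction is immediate, so $\forall x\varphi\vee\psi\equiv\forall x(\varphi\vee\psi)$, and prenexing gives $\forall\vec x\,\theta$ with $\theta\in\DLQF(\tau)$.

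This same equivalence refutes your claim in the first item that this case is ``genuinely problematic''. The selector detour for prenexing is therefore unnecessary. So is the ``separate handling'' of one-element universes, which in any case cannot be done outside the single normal-form formula.

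\textbf{First item: the gap.} The gap is the step you defer, namely getting dependence atoms out from under $\vee$. Prenexing and your Skolemization step $\exists y\forall x\chi\equiv\forall x\exists y'(\dep(\vec u,y')\wedge\chi[y'/y])$ are routine and correct. The deferred step, however, is the real content of the first item. Citing the book for it amounts to citing the theorem itself, which is all the paper does. Saying the atoms are moved out with ``the same constant-splitting trick'' does not explain how an atom can leave a disjunct.

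The missing idea is to put the selector variable into the determinant. Suppose $\alpha=\dep(\vec w,y)\wedge\alpha'$ and $y$ does not occur in $\beta$; your replacement $\dep(\vec w,v)\equiv\exists y(\dep(\vec w,y)\wedge y=v)$ with fresh $y$ guarantees this privacy. Then, for fresh $y_0,y_1,z$,
\[
\exists y\,(\alpha\vee\beta)\;\equiv\;\exists y\,\exists y_0\,\exists y_1\,\exists z\,\bigl(\dep(y_0)\wedge\dep(y_1)\wedge\dep(\vec w\,z,y)\wedge\bigl((z=y_0\wedge\alpha')\vee(z=y_1\wedge\beta)\bigr)\bigr).
\]
\emph{Right to left:} on each part of the split, $z$ is constant, so $\dep(\vec w\,z,y)$ yields $\dep(\vec w,y)$ on the $\alpha'$-part. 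This needs no inequality $\neg y_0=y_1$, because if $y_0=y_1$ then $z$ is globally constant. So one-element structures need no special case.

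\emph{Left to right:*} make the split disjoint using downward closure. Reset $y$ to a constant on the $\beta$-part; this is harmless by locality, since $y$ does not occur in $\beta$. Choose $y_0\neq y_1$ if $|A|\geq 2$ (if $|A|=1$, all atoms are trivially true), and let $z$ record which part an assignment is in.

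To finish, apply this to atoms in $\beta$ symmetrically, working from the innermost disjunctions outward. The constancy atoms created by inner applications are again private to their disjunct, so they are pulled out the same way. Finally, move the new existentials to the end of the prefix using the $\exists$/$\vee$ and $\exists$/$\wedge$ rules. Without this step, your first item is only an outline of Väänänen's proof.
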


\section{Coherence and rewritability}\label{sec:coherence and rewritability}
In this section, we formally introduce coherence and rewritability.
% We will show (in Theorem~\ref{thm:FO rewrite implies coherent}) that a $\DLQF(\tau)$-formula $\varphi$ is coherent if and only if it is first-order rewritable; furthermore, this happens precisely when $\varphi$ is rewritable to a universal first-order sentence.
% Afterwards, we study the concept of coherence in the finite and show (in Theorem~\ref{lem:finite coherence = coherence}) its equivalence to ``classical'' coherence for $\UDL(\tau)$ formulas.

\begin{definition}[Coherence, \cite{JaKontinen13}]\label{def:coherence}
    Suppose $\varphi(x_1, \dots, x_n)$ is a $\DL(\tau)$-formula and $k \in \oN$.
    We say that $\varphi$ is \emph{$k$-coherent} if for every (finite or infinite) $\tau$-structure $\calA$ and every team $T$ with domain $\{x_1,\ldots,x_n\}$ and range $A$, the following are equivalent:
    \begin{enumerate}
        \item $\tmodels{\calA}{T}{\varphi}$.
        \item For all $k$-element subteams $S\subseteq T$, it holds that $\tmodels{\calA}{S}{\varphi}$.
    \end{enumerate}
    We say that $\varphi$ is \emph{coherent} if there is a $k\geq 1$ such that $\varphi$ is $k$-coherent.
\end{definition}

%In what follows, we will write $\DLQF(\tau)$ to denote the collection of all quantifier-free formulas of dependence logic $\DL$ over a relational vocabulary $\tau$.
%When the vocabulary $\tau$ is empty, we will write just $\DLQF$.

% In this section we show that a $\DLQF(\tau)$-formula $\varphi$ is coherent  if and only if it is first-order rewritable; furthermore, this happens precisely when $\varphi$ is rewritable to a   universal first-order sentence.  We also show that for every $\DLQF(\tau)$-formula, the notion of coherence coincides with the notion of coherence in the finite.
An important observation is that the direction ``\itemstyle{1.} $\Longrightarrow$ \itemstyle{2.}'' is true for all formulas of $\DL(\tau)$ due to the \emph{downward closure} property of dependence logic (Väänänen~\cite[Prop.~3.10]{Vaananen-book}), i.\,e., if a team satisfies some $\DL(\tau)$-formula, then the same is true for \emph{all} its subteams.

We proceed with the precise definition of rewritability.

\begin{definition} \label{def:rewrite}
    Let $\mathcal L$ be a logic and, for every relational vocabulary $\tau$, let $\mathcal L(\tau)$ be the formulas in $\mathcal L$ over the vocabulary $\tau$.
    We say that a $\DL(\tau)$-formula $\varphi(x_1,\ldots,x_n)$ is \emph{rewritable to a formula in $\mathcal L$} or, simply, \emph{$\mathcal L$-rewritable} if there is a ${\mathcal L}(\tau \cup \{R\})$-sentence $\varphi^*$, where $R$ is an $n$-ary relation symbol, such that for every (finite or infinite) $\tau$-structure $\calA$ and every team $T$ with domain $\{x_1,\ldots,x_n\}$, we have that
    \[
        \calA \models_T \varphi \Longleftrightarrow ({\calA, \Rel(T)})\models \varphi^*.
    \]
\end{definition}

By definition, a \emph{strict $\Sigma_1^1$}-formula is an
existential second-formula $\Psi$ whose first-order part is a universal $\FO$-formula, i.e., $\Psi$ is
of the form $\exists {\vec S}\hspace{0.05em}\forall {\vec x}\hspace{0.05em}\theta$, where $\theta$ is a quantifier-free first-order formula.
We write $\SNP(\tau)$ to denote the collection of all strict $\Sigma_1^1$-formulas over a vocabulary $\tau$.
We also write $\SNP$ to denote the collection of all
strict $\Sigma_1^1$-formulas.

The following result follows by inspecting the proof of Theorem 6.2 in the textbook of Väänänen~\cite{Vaananen-book}.

\begin{proposition}%[Jarmo's thesis]
\label{prop:qf-strict}
Suppose that  $\varphi(x_1, \dots, x_n)$ is a $\UDL(\tau)$-formula. Then $\varphi$ is $\SNP$-rewritable.
%there is a $\SNP(\tau\cup\{R\})$-sentence $\varphi$, where $R$ is an $n$-ary relation symbol, such that for every $\tau$-structure $\calA$ and every team $T$ with domain $\{x_1,\ldots,x_n\}$, we have that
%$$\calA \models_T \varphi \Longleftrightarrow ({\calA, \Rel(T)})\models \varphi.$$
\end{proposition}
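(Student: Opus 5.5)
We prove the statement by the standard translation of dependence logic into existential second-order logic, as in the proof of Theorem~6.2 of Väänänen's book, checking that for formulas without $\exists$ the first-order part stays universal. By the normal form theorem we may assume $\varphi = \forall \vec u\, \theta$ with $\theta \in \DLQF(\tau)$. We then show, by induction on subformulas $\psi$ of $\theta$ (free variables among $\vec x\vec u$, say $m$ of them), that for every $m$-ary relation symbol $P$ there is a sentence $\psi^*(P)$ of the form $\exists \vec S\, \forall \vec v\, \chi$, with $\chi$ quantifier-free, such that $\calA \models_T \psi$ if and only if $(\calA,\Rel(T))\models \psi^*(P)$.

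The inductive cases are as follows.
\begin{itemize}
\item \emph{First-order literals} $\alpha$: take $\forall \vec v\,(P(\vec v)\to \alpha(\vec v))$.
\item \emph{Dependence atoms} $\dep(\vec y,z)$: take
\[
\forall \vec v\,\forall \vec v\,'\bigl((P(\vec v)\land P(\vec v\,')\land \vec v_{\vec y}=\vec v\,'_{\vec y})\to v_z=v'_z\bigr).
\]
\item \emph{Conjunction}: take $\psi_1^*(P)\land\psi_2^*(P)$. Renaming the bound second-order and first-order variables apart and pulling the quantifiers out keeps the $\exists\forall$ shape.
\item \emph{Disjunction}: take
\[
\exists P_1\exists P_2\bigl(\forall \vec v\,(P(\vec v)\to P_1(\vec v)\lor P_2(\vec v))\land \psi_1^*(P_1)\land \psi_2^*(P_2)\bigr).
\]
We need not require $P_1,P_2\subseteq P$. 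If $P_1$ has extra tuples, then $P_1\cap P$ also satisfies $\psi_1^*$ by downward closure, so the equivalence is preserved. Alternatively, we can simply add the conjuncts $P_i\subseteq P$. Either way the prefix is again existential second-order followed by universal first-order.
\end{itemize}

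The universal quantifier block is handled at the top. Given $R$ for the team on $\vec x$, the duplicate team $T^{\vec u}_A$ corresponds to the relation $\Rel(T)\times A^{|\vec u|}$. So we set
\[
\varphi^* \;=\; \exists P\bigl(\forall \vec x\,\forall \vec u\,(R(\vec x)\to P(\vec x,\vec u))\land \theta^*(P)\bigr).
\]
This is again strict $\Sigma^1_1$. Soundness holds by downward closure: any $P\supseteq \Rel(T)\times A^{|\vec u|}$ satisfying $\theta^*$ yields the duplicate team via its subteam. Alternatively, we can avoid the auxiliary $P$ by substituting the formula $R(\vec x)$ for $P$ in $\theta^*$. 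That substitution only places $R$ in antecedent positions of the universal part, so the shape is preserved.

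The main point needing care is the disjunction case, which is the one place where existential second-order quantification is genuinely needed. We must verify that the split relations correctly model teams $T_1\cup T_2 = T$. We must also verify that no first-order existential quantifier ever arises. The latter holds because $\exists$ never occurs in $\varphi$, and the $\exists x$ case is the only case in Väänänen's translation that introduces Skolem functions, and hence non-universal first-order structure. The remaining steps are routine prenexing of conjunctions of $\exists\vec S\forall\vec v$ formulas.
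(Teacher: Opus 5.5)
Your proposal is correct and takes the same route as the paper, which justifies the proposition only by pointing to Väänänen's translation in the proof of Theorem~6.2; you carry out that inspection explicitly, noting that without the $\exists x$ clause every first-order part stays universal and the existential second-order quantifiers can be prenexed. Treating the $\forall$-block once at the top via the normal form, rather than through an inductive $\forall$-clause, is a cosmetic difference only.
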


%\begin{definition} Let $\varphi(x_1, \dots, x_n)$ be a $\DLQF(\tau)$-formula. We say that
%$\varphi$ is \emph{first-order rewritable} if there is a $\FO(\tau \cup \{R\})$-sentence $\varphi^*$, where $R$ is an $n$-ary relation symbol, such that for every $\tau$-structure $\calA$ and every team $T$ with domain $\{x_1,\ldots,x_n\}$, we have that
%$$\calA \models_T \varphi \Longleftrightarrow ({\calA, \Rel(T)})\models \varphi^*.$$
%\end{definition}

\begin{theorem}[{\cite[Thm.~4.9]{JaKontinen13}}]\label{coherent-FO-reduction}
    Suppose that  $\varphi(x_1, \dots, x_n)$ is a coherent $\DLQF(\tau)$-formula. Then $\varphi$ is first-order rewritable. In fact, it is rewritable to a universal first-order sentence
    $\varphi^*$.
  %  Then there is a sentence $\varphi^* \in \FO(\tau \cup \{R\})$, where $R$ is $n$-ary, such that for all $\tau$-structures $\mathcal M$ and for all teams $T$ with domain $\{x_1, \dots, x_n\}$, we have that
   % $(\mathcal M, T) \models \varphi(x_1, \dots, x_n)$
   %if and only if
   %\Leftrightarrow (\mathcal M,
   % $\Rel(T) \models \varphi^*(R)$.
\end{theorem}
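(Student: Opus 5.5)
We plan to write down the universal sentence directly from $k$-coherence:
\[
    \varphi^* \coloneqq \forall \vec y_1 \cdots \forall \vec y_k \Bigl(\bigwedge_{i=1}^k R(\vec y_i) \rightarrow \chi_\varphi(\vec y_1,\ldots,\vec y_k)\Bigr),
\]
where each $\vec y_i$ is an $n$-tuple of fresh variables. The quantifier-free $\FO(\tau)$-formula $\chi_\varphi$ should express that the team $\{s_1,\ldots,s_k\}$, with $s_i(x_j) = y_{i,j}$, satisfies $\varphi$. The point is that a team with at most $k$ members is a finite configuration of tuples. Whether it satisfies a quantifier-free $\DL$-formula should therefore depend only on equalities among the $y_{i,j}$ and on the truth values of the $\tau$-atoms applied to them. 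Allowing repeated tuples in $\vec y_1,\ldots,\vec y_k$ handles all subteams of size at most $k$ at once. Downward closure then makes ``all $k$-element subteams'' and ``all subteams of size $\le k$'' equivalent whenever $|T|\ge k$. If $|T|<k$, the team itself is among the subteams we quantify over.

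The main step is to define $\chi_\varphi$ by induction on $\varphi$, simultaneously for every index set $I\subseteq\{1,\ldots,k\}$. We define $\chi^I_\psi$ to say that the subteam $\{s_i : i\in I\}$ satisfies $\psi$.
\begin{itemize}
  \item For a literal $\lambda$ we take $\chi^I_\lambda \coloneqq \bigwedge_{i\in I}\lambda(\vec y_i)$.
  \item For $\dep(\vec u, z)$ we take $\bigwedge_{i,j\in I}\bigl(\vec u(\vec y_i)=\vec u(\vec y_j)\rightarrow z(\vec y_i)=z(\vec y_j)\bigr)$.
  \item Conjunction is translated to conjunction.
  \item For $\psi_1\lor\psi_2$ we take $\bigvee_{I_1\cup I_2=I}\bigl(\chi^{I_1}_{\psi_1}\wedge\chi^{I_2}_{\psi_2}\bigr)$. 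This is a finite disjunction because $I$ is finite, and it uses the empty-team property for the case where some $I_j$ is empty.
\end{itemize}
The correctness of this translation is a routine induction on $\varphi$, with the subteam decomposition matching the index-set decomposition. Subteams of $\{s_i : i\in I\}$ correspond to subsets of $I$, possibly non-uniquely when tuples repeat, but every covering by subteams arises this way. We then set $\chi_\varphi \coloneqq \chi^{\{1,\ldots,k\}}_\varphi$.

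Finally we verify the equivalence $\calA\models_T\varphi \iff (\calA,\Rel(T))\models\varphi^*$. For the forward direction, if $\calA\models_T\varphi$, then every subteam of size at most $k$ satisfies $\varphi$ by downward closure, so every choice of $k$ tuples from $\Rel(T)$ satisfies $\chi_\varphi$. For the backward direction, if $\varphi^*$ holds, then every $k$-element subteam satisfies $\varphi$, and $k$-coherence yields $\calA\models_T\varphi$. There is one degenerate case. If $|T|<k$, then $T$ has no $k$-element subteams and coherence says nothing. Here we argue directly: we choose $\vec y_1,\ldots,\vec y_k$ to enumerate $\Rel(T)$ with repetitions, and $\chi_\varphi$ then says exactly that $T$ satisfies $\varphi$. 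The obstacle I expect is bookkeeping rather than anything conceptual. It lies in handling repeated tuples and the empty team in the disjunction clause, and in checking that constants in the atoms cause no problem. Constants are harmless because they are simply interpreted in $\calA$.
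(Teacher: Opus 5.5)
Your proposal is correct and is essentially the argument behind the cited result; the paper gives no proof of its own, only pointing to Kontinen's Theorem 4.9 and noting that universality can be read off from that proof. As there, $k$-coherence plus downward closure reduces satisfaction to subteams of size at most $k$, and for such subteams satisfaction of a quantifier-free formula is captured by a quantifier-free formula in the $k$ tuples, which yields a universal sentence. The one case your degenerate-case argument skips, $T=\emptyset$, is immediate: $\Rel(T)=\emptyset$ makes $\varphi^*$ vacuously true, and the empty team satisfies $\varphi$.
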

Note that the rewritability of a coherent $\DLQF$-formula to a universal first-order sentence follows from the proof of Theorem 4.9 in the work of Jarmo Kontinen~\cite{JaKontinen13}.
We will now prove the converse of Theorem \ref{coherent-FO-reduction} and begin with an auxiliary result, the quantifier-free analogue of which is presented by Jarmo Kontinen~\cite{JarmoThesis}.
% \timon{Maybe rather ``in~\cite{JarmoThesis}''?}

We now introduce the following notation.
If $T$ is a team with domain $V$ and range $A$ and $B \subseteq A$, then the \emph{restriction of $T$ on $B$} is the team $T \upharpoonright B = \{\,s \mid \bigcup_{x \in V} s(x) \subseteq B\,\}$.

% \timon{Move the ``likewise'' sentence to where it is needed}

\begin{lemma}\label{lem:closed under substructures}
    Let $\calA$ and $\calB$ be $\tau$-structures with universes $A$ and $B$ respectively, and let $\varphi$ be a $\UDL[\tau]$-formula.
    If $\calA\models _T\varphi$ and $\calB$ is a substructure of $\calA$, then $\tmodels{\calB}{T\upharpoonright B}{\varphi}$.
\end{lemma}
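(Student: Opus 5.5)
We prove the statement by structural induction on $\varphi$, simultaneously for all teams $T$ of $\calA$ (with domain containing the free variables of $\varphi$). Throughout we use that the restriction operation commutes with the team operations that occur in $\UDL(\tau)$.

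\textbf{Base cases.} Take a literal $R(\vec x)$, $\neg R(\vec x)$, $x_k=x_\ell$ or $\neg x_k=x_\ell$. Each of these is checked assignment by assignment. Every $s\in T\upharpoonright B$ lies in $T$ and takes values in $B$. Since $\calB$ is a substructure of $\calA$, we have $R^{\calB}=R^{\calA}\cap B^r$ and constants are interpreted identically. Hence each $s$ satisfies the literal in $\calB$ exactly as it does in $\calA$.

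For a dependence atom $\dep(\vec y,z)$, note that $T\upharpoonright B\subseteq T$. The atom is clearly preserved under subteams, since it is a universal condition on pairs of assignments, and it does not depend on the structure at all.

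\textbf{Inductive steps.}

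Conjunction is immediate from the induction hypothesis.

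For disjunction, suppose $T=T_1\cup T_2$ with $\calA\models_{T_i}\varphi_i$. Then
\[
T\upharpoonright B=(T_1\upharpoonright B)\cup(T_2\upharpoonright B),
\]
and the induction hypothesis gives $\calB\models_{T_i\upharpoonright B}\varphi_i$. This yields $\calB\models_{T\upharpoonright B}\varphi_1\lor\varphi_2$.

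The main step is the universal quantifier. Suppose $\calA\models_T\forall x\,\psi$, so $\calA\models_{T^x_A}\psi$. By the induction hypothesis, $\calB\models_{(T^x_A)\upharpoonright B}\psi$. It remains to verify the identity
\[
(T^x_A)\upharpoonright B=(T\upharpoonright B)^x_B .
\]
To see this, take an assignment $s^x_a$ in $(T^x_A)\upharpoonright B$, where $s\in T$ and $a\in A$. All its values lie in $B$, so $a\in B$ and $s$ restricted to the variables other than $x$ lies in $B$.

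\textbf{Main obstacle.} The identity needs care when $x$ already belongs to the domain $V$ of $T$. In that case $s(x)$ may lie outside $B$ while $s^x_a$ does not, so $s\notin T\upharpoonright B$ and the inclusion "$\subseteq$" fails in general.

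I would handle this by first assuming, without loss of generality, that quantified variables are not in the domain of the team. This is legitimate by locality of $\DL$: satisfaction depends only on the values of the free variables. We can therefore replace $T$ by its restriction to the free variables of $\forall x\,\psi$, or rename bound variables so they are fresh.

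With $x\notin V$, the identity holds in both directions. The inclusion "$\supseteq$" is clear: for $s\in T\upharpoonright B$ and $a\in B$, the assignment $s^x_a$ lies in $T^x_A$ and has all its values in $B$. Therefore $\calB\models_{(T\upharpoonright B)^x_B}\psi$, which is exactly $\calB\models_{T\upharpoonright B}\forall x\,\psi$.

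The absence of $\exists$ is essential here. A witness function $f$ for an existential quantifier might pick values outside $B$, and restricting the supplement team could then leave some $s$ in $T\upharpoonright B$ without any witness.
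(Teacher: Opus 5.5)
Your proof is correct and follows the paper's argument: induction on $\varphi$, with the universal step resting on the identity $(T^x_A)\upharpoonright B=(T\upharpoonright B)^x_B$. You are also right that this identity can fail when $x$ already belongs to the domain of $T$, a point the paper's proof passes over.

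Your locality/renaming workaround is fine, though passing back from the projected team to $T\upharpoonright B$ itself quietly uses downward closure. A shorter repair is to note that the inclusion $(T\upharpoonright B)^x_B\subseteq(T^x_A)\upharpoonright B$ holds unconditionally, so downward closure of $\psi$ in $\calB$ finishes the step with no assumption on $x$.
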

\begin{proof}
    The claim is proved using induction on $\varphi$.
    We consider only the case $\varphi = \forall x\theta$, where $\theta$ is quantifier-free.
    Assume $\tmodels{\calA}{T}{\forall x \theta}$.
    Then $\tmodels{\calA}{T^x_{A}}{\theta}$.
    By the induction hypothesis, we have that $\tmodels{\calB}{T^x_{A}\upharpoonright B}{\theta}$.
    Note that
    \[
        T^x_{A}\upharpoonright B=(T\upharpoonright B) ^x_{B},
    \]
    hence $\tmodels{\calB}{T\upharpoonright B}{\forall x\theta}$.
\end{proof}

%\arne{emphasise in the next statement that we are working with infinite structures because otherwise Łoś-Tarski is not applicable.}

Recall the following early result of classical model theory.

% \timon{define the preservation under substructures property?}

\begin{theorem}[Łoś-Tarski preservation theorem, \cite{MR72071,MR66303}]
    A first-order sentence is preserved under substructures if and only if it is logically equivalent to a universal sentence.
\end{theorem}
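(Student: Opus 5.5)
This is the classical preservation theorem, which the paper cites rather than proves. I would prove it with the standard compactness argument using the atomic diagram.

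\emph{Easy direction.} Suppose $\psi = \forall \vec x\,\theta(\vec x)$ with $\theta$ quantifier-free. Let $\calB$ be a substructure of $\calA$ with $\calA \models \psi$. For any tuple $\vec b$ from $B$ we have $\calA \models \theta(\vec b)$. Quantifier-free formulas with parameters in $B$ have the same truth value in $\calB$ and in $\calA$, so $\calB \models \theta(\vec b)$. Hence $\calB \models \psi$. Any sentence logically equivalent to $\psi$ is therefore also preserved under substructures.

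\emph{Hard direction.} Let $\varphi$ be preserved under substructures, and let $\Gamma$ be the set of universal $\tau$-sentences $\gamma$ with $\varphi \models \gamma$. The key claim is $\Gamma \models \varphi$.

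To show it, fix a model $\calB \models \Gamma$. Expand the vocabulary with a fresh constant $c_b$ for each $b \in B$. Let $\mathrm{Diag}(\calB)$ be the atomic diagram: all atomic and negated atomic sentences in the new constants that hold in $\calB$. I would show that $\{\varphi\} \cup \mathrm{Diag}(\calB)$ is satisfiable.

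Suppose it is not. By compactness there is a finite conjunction $\delta(c_{b_1},\dots,c_{b_m})$ of diagram sentences with $\varphi \models \neg\delta(\vec c)$. The constants $\vec c$ do not occur in $\varphi$, so $\varphi \models \forall \vec y\,\neg\delta(\vec y)$. This sentence is universal, so it lies in $\Gamma$, and hence $\calB \models \forall \vec y\,\neg\delta(\vec y)$. But $\calB \models \delta(\vec b)$ by the choice of the diagram, which is a contradiction.

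So there is a model $\calA'$ of $\{\varphi\} \cup \mathrm{Diag}(\calB)$. The map $b \mapsto c_b^{\calA'}$ is an embedding of $\calB$ into the $\tau$-reduct of $\calA'$, because the diagram fixes equalities, inequalities and all relation facts. Identifying $\calB$ with its image, $\calB$ is a substructure of a model of $\varphi$. Preservation under substructures then gives $\calB \models \varphi$.

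\emph{Conclusion.} From $\Gamma \models \varphi$, compactness yields finitely many $\gamma_1,\dots,\gamma_r \in \Gamma$ with $\gamma_1 \wedge \dots \wedge \gamma_r \models \varphi$. Conversely $\varphi \models \gamma_i$ for each $i$. The conjunction of universal sentences is equivalent to a single universal sentence after renaming variables and pulling the quantifiers to the front.

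The main obstacle is the diagram step: the consistency argument relies on the constants being fresh, so that the finite contradiction can be universally generalized into a member of $\Gamma$. The theorem needs arbitrary, possibly infinite, structures. It famously fails over finite structures, which is why the paper's definitions quantify over finite and infinite structures.
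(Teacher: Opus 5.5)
Your proof is correct; it is the standard atomic-diagram-plus-compactness argument, and the paper gives no proof of its own to compare against, since it only cites \L{}o\'s and Tarski and uses the theorem as a black box. One small point for the paper's setting, where vocabularies may contain constant symbols: take $\mathrm{Diag}(\calB)$ to be all atomic and negated atomic sentences of $\tau\cup\{c_b \mid b\in B\}$ true in $\calB$, so that it contains $d = c_{d^{\calB}}$ for every constant $d\in\tau$. Then $b\mapsto c_b^{\calA'}$ respects the $\tau$-constants and is a genuine $\tau$-embedding, and the step to $\varphi\models\forall\vec y\,\neg\delta(\vec y)$ is unaffected, because only the fresh constants are replaced by variables.
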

For the next theorem it is important to recall that coherence is defined not only for finite, but over all (finite or infinite) $\tau$-structures.
\begin{theorem}\label{thm:FO rewrite implies coherent}
    Suppose that $\varphi(x_1,\ldots,x_n)$ is a $\UDL[\tau]$-formula, where $\tau$ is a relational vocabulary.
    If $\varphi$ is first-order rewritable, then $\varphi$ is coherent.
\end{theorem}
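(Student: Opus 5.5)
The plan is to turn the first-order rewriting into a \emph{universal} rewriting by the Łoś-Tarski preservation theorem, and then read off a coherence bound from the number of universally quantified variables.

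Let $\varphi^*$ be a $\FO(\tau\cup\{R\})$-sentence rewriting $\varphi$, with $R$ of arity $n$. First I would observe that every $n$-ary relation $R^{\calA}\subseteq A^n$ is of the form $\Rel(T)$ for a team $T$ with domain $\{x_1,\ldots,x_n\}$. The team is unique, since an assignment on $\{x_1,\ldots,x_n\}$ is determined by its tuple of values. Hence $\varphi^*$ faithfully describes $\varphi$ on \emph{all} $(\tau\cup\{R\})$-structures, finite or infinite.

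Next I would show that $\varphi^*$ is preserved under substructures. Let $(\calB,R^{\calB})$ be a substructure of $(\calA,\Rel(T))$; since $\tau$ is relational, $B$ is just a subset of $A$. Then $R^{\calB}=\Rel(T)\cap B^n=\Rel(T\upharpoonright B)$. If $(\calA,\Rel(T))\models\varphi^*$, then $\calA\models_T\varphi$. Lemma~\ref{lem:closed under substructures} gives $\calB\models_{T\upharpoonright B}\varphi$, and therefore $(\calB,R^{\calB})\models\varphi^*$. By the Łoś-Tarski theorem, $\varphi^*$ is equivalent to a universal sentence $\forall y_1\cdots\forall y_m\,\theta$ with $\theta$ quantifier-free. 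This is the step where arbitrary (including infinite) structures are essential, and that is why coherence is defined over all structures.

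Now set $k:=m^n$ and argue by contraposition. Suppose $\calA\not\models_T\varphi$. Then there are $b_1,\ldots,b_m\in A$ with $(\calA,\Rel(T))\not\models\theta(b_1,\ldots,b_m)$. The truth value of $\theta(\vec b)$ depends only on the $\tau$-facts about $\vec b$ and on which $n$-tuples over $\{b_1,\ldots,b_m\}$ lie in $\Rel(T)$. Let $S\subseteq T$ consist of those assignments whose value tuple has all entries among $b_1,\ldots,b_m$; then $|S|\le m^n$. Since $\Rel(S)$ and $\Rel(T)$ agree on all tuples over $\{b_1,\ldots,b_m\}$, we get $(\calA,\Rel(S))\not\models\theta(\vec b)$, hence $\calA\not\models_S\varphi$. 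If $|T|\ge k$, I would enlarge $S$ to a $k$-element subteam of $T$. Downward closure shows the enlarged subteam still falsifies $\varphi$, because any team satisfying $\varphi$ would make its subteam $S$ satisfy $\varphi$. This gives the nontrivial direction \itemstyle{2.}~$\Rightarrow$~\itemstyle{1.}, and \itemstyle{1.}~$\Rightarrow$~\itemstyle{2.} is downward closure.

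The main obstacle I foresee is the boundary case $|T|<k$, where $T$ has no $k$-element subteams and condition~\itemstyle{2.} holds vacuously. I would handle this by reading ``$k$-element'' as ``at most $k$-element'', which I expect is the intended convention for coherence. Under that reading, $S$ itself is the required witness and no padding is needed.
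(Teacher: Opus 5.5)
Your proof is correct and follows the paper's argument. The rewriting $\varphi^*$ is preserved under substructures by Lemma~\ref{lem:closed under substructures}, \L{}o\'{s}--Tarski makes it universal with $m$ quantifiers, and a falsifying tuple $b_1,\ldots,b_m$ yields a subteam of size at most $m^n$ that falsifies $\varphi$. The only differences are cosmetic: you evaluate $\theta(\vec b)$ directly in $(\calA,\Rel(S))$ rather than passing to the substructure on $\{b_1,\ldots,b_m\}$ and back via the lemma, and you make explicit two points the paper leaves implicit (that every interpretation of $R$ is of the form $\Rel(T)$, and that ``$k$-element subteams'' must be read as ``at most $k$-element'').
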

\begin{proof}
    Let $\varphi^*$ be a $\FO$-sentence witnessing the first-order rewritability of $\varphi$.
    We claim that $\varphi^*$ is equivalent to a universal $\FO$-sentence.
    By Lemma \ref{lem:closed under substructures}, $\varphi^*$ is closed under substructures.
    Hence, by the Łoś-Tarski preservation theorem, $\varphi^*$ is equivalent to a universal $\FO$-sentence, i.e., to a $\FO$-sentence of the form $\forall x_1 \dots \forall x_m \theta(x_1, \dots, x_m)$, where $m \geq 1$ and $\theta$ is a quantifier-free $\FO$-formula.

	We will show that $\varphi$ is $m^n$-coherent.
    By the downward closure of $\varphi$, it suffices to show the following: if $\calA$ is a $\tau$-structure and $T$ is a team with domain $\{x_1, \ldots, x_n\}$ such that $\calA \not \models_T \varphi$, then there is a subteam $S$ of $T$ such that $|S|\leq m^n$ and $\tnotmodels{\calA}{S}{\varphi}$.
    Towards this goal, let $\calA$ be a $\tau$-structure and let $T$ be a team such that $\calA \not \models_T \varphi$.
    Since $\varphi$ is first-order rewritable to $\forall x_1 \dots \forall x_m \theta(x_1,\dots,x_m)$, we have that $(\calA,\Rel(T))\not \models \forall x_1\dots \forall x_m \theta(x_1,\dots,x_m)$.
    Therefore, there are elements $a_1,\dots,a_m$ from the domain $A$ of $\calA$ such that $(\calA,\Rel(T))\models \neg \theta(a_1,\dots,a_m)$.
    Let $\calB$ be a substructure of $\calA$ with universe $B = \{a_1,\ldots,a_m\}$.
    Since $\theta$ is quantifier-free, we have that $(\calB, \Rel(T)\upharpoonright B)\models \neg \theta(a_1,\dots,a_m)$.
    % , where $(\calB, \Rel(T)\upharpoonright B)$ is the substructure of $(\calA,\Rel(T))$ generated by $B$.
    Thus, $(\calB, \Rel(T)\upharpoonright B)\not \models \forall x_1\dots \forall x_m \theta(x_1,\dots,x_m)$, hence $\tnotmodels{\calB}{S}{\varphi}$, where $S$ is the subteam of $T$ associated with $\Rel(T) \upharpoonright B$.
    Note that $|S|\leq m^n$, since $|\Rel(T)\upharpoonright B|\leq m^n$.
    By Lemma \ref{lem:closed under substructures}, we have that $\calA \not \models_{S} \varphi$.
    This concludes the proof that $\varphi$ is $m^n$-coherent.
\end{proof}

Note that Theorem~\ref{thm:FO rewrite implies coherent} will also follow from results we show later in the paper.
% Here, we showed a self-contained proof.

By combining Theorem~\ref{coherent-FO-reduction} and Theorem~\ref{thm:FO rewrite implies coherent}, we obtain the following connection between coherence and first-order rewritability for quantifier-free dependence logic.
\begin{corollary}
	Let $\varphi$ be a $\DLQF(\tau)$-formula. The following statements are equivalent:
    \begin{enumerate}
        \item The formula $\varphi$ is coherent.
        \item The formula $\varphi$ is rewritable to a universal $\FO$-sentence.
        \item The formula $\varphi$ is first-order rewritable.
    \end{enumerate}
\end{corollary}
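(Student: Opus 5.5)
The corollary follows by closing a cycle of implications, using only Theorem~\ref{coherent-FO-reduction} and Theorem~\ref{thm:FO rewrite implies coherent}. I will prove $1 \Rightarrow 2 \Rightarrow 3 \Rightarrow 1$ for a fixed $\DLQF(\tau)$-formula $\varphi(x_1,\ldots,x_n)$.

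\emph{$1 \Rightarrow 2$.} Assume $\varphi$ is coherent. Then Theorem~\ref{coherent-FO-reduction} applies directly. It gives a rewriting of $\varphi$ to a universal first-order sentence $\varphi^*$ over $\tau\cup\{R\}$. As remarked after that theorem, the universal form is extracted from the proof of \cite[Thm.~4.9]{JaKontinen13}.

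\emph{$2 \Rightarrow 3$.} This is immediate. A universal $\FO$-sentence is in particular an $\FO$-sentence, so the same $\varphi^*$ witnesses first-order rewritability in the sense of Definition~\ref{def:rewrite}.

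\emph{$3 \Rightarrow 1$.} Every quantifier-free formula contains no $\exists$, so $\DLQF(\tau)\subseteq\UDL(\tau)$. Hence Theorem~\ref{thm:FO rewrite implies coherent} applies and shows that $\varphi$ is $k$-coherent for some $k$ (indeed $k=m^n$, where $m$ is the number of universal quantifiers in the Łoś--Tarski normal form of the rewriting).

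The only point needing care is the hypothesis in Theorem~\ref{thm:FO rewrite implies coherent} that $\tau$ be relational, whereas the grammar allows constants. If $\tau$ contains constants, I would argue that Lemma~\ref{lem:closed under substructures} and the Łoś--Tarski argument still work. The substructure $\calB$ is taken to be generated by $a_1,\dots,a_m$ together with the interpretations of the constants. Then $|B|\le m+c$, where $c$ is the number of constants, and this yields $(m+c)^n$-coherence instead of $m^n$-coherence. Otherwise the corollary is simply read for relational $\tau$, matching the theorem. I do not expect any step to be a real obstacle; the corollary is a direct assembly of the two theorems.
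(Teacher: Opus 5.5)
Your proof is correct and matches the paper's argument: the corollary comes from combining Theorem~\ref{coherent-FO-reduction} (for $1\Rightarrow 2$) with Theorem~\ref{thm:FO rewrite implies coherent}, applied through the inclusion $\DLQF(\tau)\subseteq\UDL(\tau)$ (for $3\Rightarrow 1$), with $2\Rightarrow 3$ trivial. The caveat about constants is not needed for the statement as posed, since the paper defines $\DLQF(\tau)$ for relational $\tau$; your $(m+c)^n$ variant, using the substructure that also contains the interpretations of the constants, would nonetheless go through.
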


Next, we introduce the notion of \emph{coherence in the finite} and show the equivalence between coherence and coherence in the finite.
It is worth noting that this equivalence does not immediately follow from  Theorem \ref{thm:FO rewrite implies coherent}, as its proof uses the Łoś-Tarski preservation theorem which fails in the finite~\cite{DBLP:journals/jsyml/Tait59}.

\begin{definition}[Coherence in the finite]\label{def:finite coherence}
    Suppose $\varphi(x_1, \dots, x_n)$ is a $\DL(\tau)$-formula and $k \in \oN$.
    We say that $\varphi$ is \emph{$k$-coherent in the finite} if for every finite $\tau$-structure $\calA$ and every team $T$ with domain $\{x_1,\ldots,x_n\}$ and range $A$, the following are equivalent:
    \begin{enumerate}
        \item $\tmodels{\calA}{T}{\varphi}$.
        \item For all $k$-element subteams $S\subseteq T$, it holds that $\tmodels{\calA}{S}{\varphi}$.
    \end{enumerate}
We say that $\varphi$ is \emph{coherent in the finite} if there is a $k\geq 1$ such that $\varphi$ is $k$-coherent in the finite.
\end{definition}

%\arne{I suggest to omit the strict in the next statement as it is already within the SNP definition existing.}
\begin{lemma}\label{lem:SNP substructure}
    Let $\varphi$ be a $\SNP$-sentence over some relational vocabulary $\tau$ and let $\calA$ be a $\tau$-structure over universe $A$.
    Then the following statements are equivalent:
    \begin{enumerate}[1.]
        \item $\calA \models \varphi$.
        \item $\calB \models \varphi$, for every finite substructure $\calB$ of $\calA$.
    \end{enumerate}
\end{lemma}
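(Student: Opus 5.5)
The direction from 1.\ to 2.\ is routine: write $\varphi = \exists \vec S\,\forall \vec x\,\theta$ with $\theta$ quantifier-free. If $\calA \models \varphi$, fix witnessing relations $\vec S^{\calA}$ on $A$. For a finite substructure $\calB$, restrict them to $B$. Since $\theta$ is quantifier-free and $\tau$ is relational, every tuple from $B$ satisfies $\theta$ in $(\calB,\vec S^{\calA}\upharpoonright B)$ exactly when it does in $(\calA,\vec S^{\calA})$. Hence $\calB\models\varphi$.

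For the converse I will use the compactness theorem of first-order logic. Assume every finite substructure of $\calA$ satisfies $\varphi$. Expand the vocabulary by a fresh constant $c_a$ for each $a\in A$ and by the relation symbols $\vec S$, now treated as first-order symbols. Let $\Delta(\calA)$ be the atomic diagram of $\calA$: all atomic and negated atomic $\tau$-sentences over the constants $c_a$ that are true in $\calA$, together with $c_a\neq c_b$ for $a\neq b$. Let $\Gamma$ be the set of quantifier-free sentences $\theta(c_{a_1},\dots,c_{a_m})$ for all tuples $a_1,\dots,a_m$ from $A$. I claim $\Delta(\calA)\cup\Gamma$ is finitely satisfiable.

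To see this, take a finite subset $\Sigma$. It mentions only finitely many constants, naming a finite set $B\subseteq A$. Let $\calB$ be the substructure of $\calA$ on $B$; this is possible because the vocabulary is relational (if constants are allowed, add their interpretations to $B$). By hypothesis there are $\vec S^{\calB}$ with $(\calB,\vec S^{\calB})\models\forall\vec x\,\theta$. Interpret each $c_b$ as $b$; then $\Sigma$ holds in this structure. By compactness there is a model $\mathcal M$ of $\Delta(\calA)\cup\Gamma$.

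The map $a\mapsto c_a^{\mathcal M}$ embeds $\calA$ into the $\tau$-reduct of $\mathcal M$, so $\calA$ is isomorphic to a substructure of it. Pull back the interpretations $\vec S^{\mathcal M}$ along this embedding to obtain relations $\vec S$ on $A$. Because $\theta$ is quantifier-free, $\Gamma$ then gives $(\calA,\vec S)\models\theta(\vec a)$ for every tuple $\vec a$, so $\calA\models\varphi$.

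The main obstacle is the compactness step. The key point is that the finitely many constraints cannot be satisfied simply by restricting one fixed choice of $\vec S$: the witnesses $\vec S^{\calB}$ differ from one finite substructure to another, and compactness is what glues them together. An alternative is a König's-lemma or ultraproduct argument, but compactness via the diagram is cleanest. One must also check that the preservation of quantifier-free formulas under embeddings is used correctly in both directions.
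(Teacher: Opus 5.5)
Your proof is correct and follows essentially the same route as the paper. For 1.\ $\Rightarrow$ 2.\ you restrict the witnessing relations; for 2.\ $\Rightarrow$ 1.\ you apply compactness to the atomic diagram of $\calA$ together with all constant instances of the quantifier-free matrix, and then read off $\vec S$ on the embedded copy of $\calA$. Your final step, pulling $\vec S^{\mathcal M}$ back along the embedding and using $\Gamma$ directly, is slightly more careful than the paper's wording. The paper argues via the copy of $\calA$ being a substructure of the compactness model, but that model need not satisfy $\forall\vec x\,\theta$ globally.
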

\begin{proof}
    Suppose $\varphi$ is of the form $\exists \vec{S}\, \forall \vec{x} \psi(\vec{x}, \vec{S})$, where $\psi$ is a quantifier-free formula over $\tau$.
    \begin{description}
        \item[``1. $\Longrightarrow$ 2.'']
        This direction follows easily from the preservation of universal first-order sentences under substructures.
        If $\calA$ satisfies $\varphi$, then there exists an expansion of $\calA$ with relations $\vec{S}$ satisfying the universal part.
        If $\calB$ is a substructure of $\calA$ with universe $B$, then the expansion of $\calB$ with the restrictions of the relations $\vec{S}$ on $B$ continues to satisfy the universal first-order sentence $\forall \vec{x} \psi(\vec{x}, \vec{S})$.
        \item[``2. $\Longrightarrow$ 1.'']
        Let $\calA$ be a $\tau$-structure such that every finite substructure $\calB$ of $\calA$ satisfies $\varphi$.
        We expand the vocabulary $\tau$ with a constant symbol $\underline{a}$ for each element $a$ in the universe of $\calA$.
        Let $\Sigma$ be the following set of quantifier-free sentences over this expanded vocabulary:
        \begin{enumerate}
            \item[(i)] The set $\Sigma$ contains the sentences $\underline{a} \neq \underline{b}$, for every two distinct elements $a, b$ in $A$.
            \item[(ii)] The set $\Sigma$ contains all atomic sentences $R(\underline{a_1}, \dots, \underline{a_k})$ and all negated atomic sentences $\neg R(\underline{b_1}, \dots, \underline{b_n})$ that hold in $\calA$.
            \item[(iii)] The set $\Sigma$ contains the quantifier-free sentence $\psi(\underline{a_1}, \dots, \underline{a_n})$ obtained by replacing variables $x_i$ with constant symbols $\underline{a_i}$, for every $n$-tuple $(a_1, \dots, a_n)$ in $A^n$.
        \end{enumerate}
        We claim that $\Sigma$ is finitely satisfiable.
        Let $\Sigma' \subseteq \Sigma$ be a finite subset.
        Let $\underline{a_1}, \dots, \underline{a_m}$ be the finitely many constant symbols occurring in $\Sigma'$.
        Consider the finite substructure $\calB$ of $\calA$ with universe $\{a_1, \dots, a_m\}$.
        By hypothesis, $\calB \models \varphi$, i.e., $\calB \models \exists \vec{S}\, \forall \vec{x} \psi$.
        Thus, $\calB$ expanded with these constant symbols is a model of $\Sigma'$.

        By the Compactness Theorem of first-order logic, since $\Sigma$ is finitely satisfiable, we have that the entire set $\Sigma$ is satisfiable. Let $\calA^*$ be a model of $\Sigma$. By the construction of $\Sigma$, we have that $\calA^*$ contains an isomorphic copy of $\calA$ as a substructure. Being a substructure of $\calA^*$, this isomorphic copy of $\calA$ satisfies $\varphi$, hence
        $\calA$ satisfies $\varphi$ as well. \qedhere
    \end{description}
\end{proof}

We are now able to show that coherence and coherence in the finite are equivalent.
\begin{theorem}\label{lem:finite coherence = coherence}
    Suppose that $\varphi$ is a $\UDL(\tau)$-formula and $k \in \oN$.
    Then following statements are equivalent:
    \begin{enumerate}
        \item The formula $\varphi$ is $k$-coherent.
        \item The formula $\varphi$ is $k$-coherent in the finite.
    \end{enumerate}
    \end{theorem}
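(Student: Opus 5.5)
The implication ``1.\ $\Longrightarrow$ 2.'' is immediate: finite structures are a special case of arbitrary ones, so it needs no argument. The content lies in ``2.\ $\Longrightarrow$ 1.''. Assume $\varphi(x_1,\dots,x_n)$ is $k$-coherent in the finite. Let $\calA$ be an arbitrary $\tau$-structure and $T$ a team with domain $\{x_1,\dots,x_n\}$ such that $\calA\models_S\varphi$ for every $k$-element subteam $S\subseteq T$. We must show $\calA\models_T\varphi$. The plan passes to the first-order side via Proposition~\ref{prop:qf-strict} and then uses Lemma~\ref{lem:SNP substructure} to reduce to finite substructures.

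First, fix an $\SNP(\tau\cup\{R\})$-sentence $\varphi^*$ witnessing the $\SNP$-rewritability of $\varphi$ from Proposition~\ref{prop:qf-strict}. It then suffices to show $(\calA,\Rel(T))\models\varphi^*$. By Lemma~\ref{lem:SNP substructure}, it is enough to show that every finite substructure $(\calB,\Rel(T)\upharpoonright B)$ of $(\calA,\Rel(T))$ satisfies $\varphi^*$. Here $\calB$ is the finite substructure of $\calA$ with universe $B$, and $\Rel(T)\upharpoonright B = \Rel(T)\cap B^n$ corresponds to the subteam $T\upharpoonright B$. Applying the rewriting again, now on the finite structure $\calB$, this amounts to proving $\calB\models_{T\upharpoonright B}\varphi$.

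Second, fix a finite $B\subseteq A$. Since $\calB$ is finite and $T\upharpoonright B$ is a team over $\calB$, $k$-coherence in the finite reduces the goal to checking $\calB\models_{S}\varphi$ for every $k$-element subteam $S\subseteq T\upharpoonright B$. Each such $S$ is also a $k$-element subteam of $T$, so by hypothesis $\calA\models_S\varphi$. Moreover $S\upharpoonright B=S$, because all values of $S$ already lie in $B$. Lemma~\ref{lem:closed under substructures} therefore gives $\calB\models_S\varphi$. Combining the two steps yields $\calA\models_T\varphi$.

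The main obstacle I expect is bookkeeping rather than a genuine difficulty. One must make sure that the substructures of $(\calA,\Rel(T))$ in Lemma~\ref{lem:SNP substructure} are exactly the structures $(\calB,\Rel(T\upharpoonright B))$, that is, that restricting the relation $R$ to $B$ matches restricting the team. One must also handle the degenerate cases where $T$ or $T\upharpoonright B$ has fewer than $k$ elements. In those cases I will read ``$k$-element'' as ``at most $k$-element'', or note that a team with at most $k$ elements is itself covered by the hypothesis via downward closure. Relational vocabulary is needed so that every subset $B$ carries a substructure; $\varphi^*$ is over a relational vocabulary because $\tau$ is assumed relational.
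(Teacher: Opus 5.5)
Your proof is correct and follows the paper's argument: you pass to the $\SNP$-rewriting via Proposition~\ref{prop:qf-strict}, reduce to finite substructures via Lemma~\ref{lem:SNP substructure}, and transfer the small subteams down to $\calB$ via Lemma~\ref{lem:closed under substructures}. The only difference is presentational. The paper argues by contradiction from an infinite counterexample, while you argue directly that every finite substructure satisfies $\varphi^*$.
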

\begin{proof}
    Clearly, coherence implies coherence in the finite.
    Thus, it suffices to show the implication ``\itemstyle{2.} $\Longrightarrow$ \itemstyle{1.}''. Towards a contradiction, assume that there is an infinite model $\calA$ and a team $T$ such that $\tnotmodels{\calA}{T}{\varphi}$, but for all $k$-element subteams $S$ of $T$, it holds that $\tmodels{\calA}{S}{\varphi}$.
    By Proposition~\ref{prop:qf-strict}, $\varphi$ is rewritable to some $\SNP$-sentence $\varphi^*$.
    Since $(\calA, \Rel(T))\not \models \varphi^*$, Lemma~\ref{lem:SNP substructure} implies that there exists a finite substructure $\calB$ of $\calA$ such that $(\calB, \Rel(T)\upharpoonright B) \not \models \varphi^*$. %, where $\calB=(B,\Rel(T)\upharpoonright B)$.
    This implies that $\calB\not \models_{T\upharpoonright B}\varphi$. Furthermore, by Lemma~\ref{lem:closed under substructures}, we have that $\calB \models_{S}\varphi$, for all $k$-element subteams $S$ of $T\upharpoonright B$.
    This contradicts the assumption that $\varphi$ is $k$-coherent in the finite, hence  $\varphi$ is indeed $k$-coherent.
\end{proof}

\section{Strong coherence and rewritability}\label{sec:strong coherence}
Regarding universal dependence logic formulas, there is an alternative and, as it turns out, stricter way to define coherence.
This stricter notion of coherence was introduced and studied in Jarmo Kontinen's doctoral thesis~\cite{JarmoThesis}.

\begin{definition}[Strong coherence]\label{def:strong coherence}
    Suppose $\varphi(x_1, \dots, x_n)$ is a $\UDL(\tau)$-formula of the    form $\forall \vec{x}\theta$, where $\theta \in \DLQF(\tau)$, and suppose $k \in \oN$.
    We say that $\varphi$ is \emph{strongly $k$-coherent} if for every (finite or infinite) $\tau$-structure $\calA$ and every team $T$ with domain $\{x_1,\ldots,x_n\}$ and range $A$, the following are equivalent:
    \begin{enumerate}
        \item $\tmodels{\calA}{T}{\varphi}$.
        \item For all $k$-element subteams
        $S \subseteq T^{\vec{x}}_A$, it holds that $\tmodels{\calA}{S}{\theta}$.
    \end{enumerate}
\end{definition}
The related concepts of \emph{strongly coherent} and \emph{strongly coherent in the finite} are defined as the respective notions for coherence.

Clearly, there is no difference between coherence and strong coherence for $\DLQF$-formulas.
As we shall see in the following, these notions are different for $\UDL$-formulas.
\begin{proposition}\label{prop:strong coherence implies coherence}
    Let $\varphi$ be a $\UDL(\tau)$-formula.
    If $\varphi$ is strongly $k$-coherent, then it is $k$-coherent.
\end{proposition}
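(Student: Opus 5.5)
Write $\varphi = \forall \vec{x}\theta$ with $\theta \in \DLQF(\tau)$, where $\vec x=x_{n+1},\dots,x_{n+m}$ are the quantified variables (renamed, if necessary, to be distinct from the free variables $x_1,\dots,x_n$). By downward closure, the direction ``1.~$\Longrightarrow$~2.'' of $k$-coherence always holds. So I only need the converse: assume $\calA$ and $T$ are such that $\tmodels{\calA}{S}{\varphi}$ for every $k$-element subteam $S\subseteq T$, and derive $\tmodels{\calA}{T}{\varphi}$.

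The plan is to check condition 2 of strong $k$-coherence for $T$ and then apply strong $k$-coherence. Let $U \subseteq T^{\vec x}_A$ be a subteam with $|U| = k$. Each $u\in U$ has the form $s^{\vec x}_{\vec a}$ for some $s\in T$ and $\vec a\in A^m$. Since $\vec x$ is disjoint from the domain of $T$, $s$ is simply the restriction of $u$ to $\{x_1,\dots,x_n\}$. Let $S_0 = \{u\upharpoonright\{x_1,\dots,x_n\} \mid u\in U\}\subseteq T$. Then $|S_0|\le k$, and I extend $S_0$ to a $k$-element subteam $S\subseteq T$. This is possible whenever $|T|\ge k$. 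If $|T|<k$, the only such subteam is $T$ itself, and I read ``$k$-element subteams'' as subteams of size at most $k$ (equivalently, condition 2 of strong coherence is then trivial or about $T$ itself). This size convention is the only delicate point, and I would resolve it the same way for both notions so the argument is uniform.

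By hypothesis $\tmodels{\calA}{S}{\forall\vec x\theta}$, i.e., $\tmodels{\calA}{S^{\vec x}_A}{\theta}$, using iterated duplication. Clearly $U\subseteq S_0{}^{\vec x}_A\subseteq S^{\vec x}_A$, so by downward closure of $\theta$ we get $\tmodels{\calA}{U}{\theta}$. As $U$ was arbitrary, condition 2 of strong $k$-coherence holds for $T$, and strong $k$-coherence yields $\tmodels{\calA}{T}{\varphi}$, as required.

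I expect no real obstacle beyond this bookkeeping. Two small points should be stated explicitly. First, $(\cdot)^{\vec x}_A$ is monotone in the team. Second, projecting a $k$-element subteam of $T^{\vec x}_A$ onto the free variables lands in $T$ and has at most $k$ elements.
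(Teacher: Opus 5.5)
Your proof is correct and follows the paper's argument. You project a $k$-element subteam of $T^{\vec x}_A$ onto the free variables to get a subteam of $T$ of size at most $k$. You apply the hypothesis there, and then conclude using monotonicity of duplication together with downward closure of $\theta$.

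The only differences are cosmetic. You treat all quantifiers at once instead of a single one. You also state explicitly the ``at most $k$'' reading of $k$-element subteams, which the paper uses implicitly when it invokes downward closure.
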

\begin{proof}
    For simplicity, assume that $\varphi$ has a single universal quantifier, i.e., it is of the form $\forall x \theta(x,x_1,\dots,x_n)$, where $\theta$ is a $\DLQF(\tau)$-formula.
    Assume further that $\calA$ is a $\tau$-structure and $T$ is a team on $\calA$ such that for every $S\subseteq T$ with $|S|=k$, we have that $\tmodels{\calA}{S}{\varphi}$; by downward closure, we have that $\tmodels{\calA}{S}{\varphi}$ holds also for every $S \subseteq T$ such that $|S| \leq k$.
    To show that $\varphi$ is $k$-coherent, we must show $\tmodels{\calA}{T}{\varphi}$.
    Since $\varphi$ is strongly $k$-coherent, it suffices to show that for every subteam $R$ of $T^x_A$ of size $k$, we have that $\tmodels{\calA}{R}{\theta}$.
    So, let $R\subseteq T^x_A$ be a subteam of $T^x_A$ with $|R|=k$.
    Let $R^*$ be the projection of $R$ on the variables $x_1,\ldots,x_n$.
    Then, $R^*$ is a subteam of $T$ and $|R^*| \leq k$; furthermore, we have that $R\subseteq (R^*)^x_A$.
    By the hypothesis about $\varphi$, we have that $\calA \models_{R^*} \varphi$, which means that $\calA \models_{(R^*)^x_A}\theta$, hence, by downward closure, we have that $\calA \models_R \theta$, which was to be shown.
\end{proof}

The following known result connects FO-rewritability to strongly coherent $\UDL(\tau)$-formulas.
\begin{theorem}[{\cite{JaKontinen13}}]\label{thm:coherence-quantified}
    Suppose $\varphi$ is a strongly coherent $\UDL(\tau)$-formula.
    Then $\varphi$ is first-order rewritable.
    In fact, it is rewritable to a universal first-order sentence $\varphi^*$.
\end{theorem}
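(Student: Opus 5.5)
Write $\varphi=\forall\vec x\,\theta$ with $\vec x=y_1,\dots,y_m$, $\theta\in\DLQF(\tau)$ having free variables among $x_1,\dots,x_n,y_1,\dots,y_m$, and suppose $\varphi$ is strongly $k$-coherent. I plan to reduce everything to a universal first-order condition on the duplicate team, mimicking the quantifier-free construction behind Theorem~\ref{coherent-FO-reduction}.

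\textbf{Step 1: describing the $k$-element subteams.} A subteam $S\subseteq T^{\vec x}_A$ with $|S|\le k$ is given by at most $k$ tuples $(\vec a_i,\vec b_i)$ with $\vec a_i\in\Rel(T)$ and $\vec b_i\in A^m$. Since by downward closure it does not matter whether we use subteams of size exactly $k$ or at most $k$, strong $k$-coherence says
\[
\calA\models_T\varphi \iff \text{for all } (\vec a_1,\vec b_1),\dots,(\vec a_k,\vec b_k) \text{ with } \vec a_i\in\Rel(T):\ \calA\models_{\{s_1,\dots,s_k\}}\theta ,
\]
where $s_i$ assigns $\vec a_i,\vec b_i$ to $\vec x_1\ldots x_n,\vec y$.

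\textbf{Step 2: finite teams are first-order describable.} The key point is that for a fixed quantifier-free $\theta$ and a fixed team size $\le k$, whether $\calA\models_{\{s_1,\dots,s_k\}}\theta$ depends only on the quantifier-free type of the tuple $(\vec a_1,\vec b_1,\dots,\vec a_k,\vec b_k)$ in $\calA$. That type consists of the equalities among the entries and the atomic $\tau$-facts about them; this is visible by induction on $\theta$, since atoms, negated atoms and dependence atoms only compare values and check relations, and disjunction ranges over the finitely many splittings of the team. There are only finitely many such types in a finite relational vocabulary, so there is a quantifier-free $\FO(\tau)$-formula $\chi_\theta(\vec u_1,\vec v_1,\dots,\vec u_k,\vec v_k)$ such that $\calA\models\chi_\theta(\vec a_1,\vec b_1,\dots)$ iff the team $\{s_1,\dots,s_k\}$ satisfies $\theta$. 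Concretely, $\chi_\theta$ is the disjunction of the complete quantifier-free types that are good. This is the same ingredient as in the quantifier-free case~\cite{JaKontinen13}, and the only place needing care is that it must uniformly handle tuples with repetitions, which correspond to teams of size less than $k$.

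\textbf{Step 3: the rewriting.} I would set
\[
\varphi^*\;:=\;\forall\vec u_1\vec v_1\cdots\forall\vec u_k\vec v_k\Big(\bigwedge_{i=1}^k R(\vec u_i)\rightarrow\chi_\theta(\vec u_1,\vec v_1,\dots,\vec u_k,\vec v_k)\Big).
\]
This is a universal $\FO(\tau\cup\{R\})$-sentence, and by Steps 1 and 2, $(\calA,\Rel(T))\models\varphi^*$ iff every subteam of $T^{\vec x}_A$ of size at most $k$ satisfies $\theta$, iff $\calA\models_T\varphi$. I expect the main obstacle to be Step 2, namely stating cleanly that satisfaction of a $\DLQF$-formula by a small team is determined by the quantifier-free type. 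Everything else is bookkeeping.
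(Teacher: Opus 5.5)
Your proof is correct and is essentially the standard argument behind this result. The paper does not reprove Theorem~\ref{thm:coherence-quantified}; it cites Kontinen, whose rewriting has the same form as yours: a universal sentence over $k$ tuples from $R$ (here extended by arbitrary values for the bound variables $y_1,\dots,y_m$), guarding a quantifier-free description of when a team of size at most $k$ satisfies $\theta$. One small point to tidy: since $\tau$ may contain constant symbols, the quantifier-free types in your Step~2 should also record equalities and atomic facts involving those constants, restricted to the finitely many symbols occurring in $\theta$.
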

The following simple example shows that Theorem \ref{thm:coherence-quantified} does not extend to $k$-coherent $\UDL(\tau)$-formulas;  furthermore, $k$-coherence for some $k$ does not imply strong coherence.

\begin{example}
    Consider the $\UDL(\tau)$-sentence
    \[
        \varphi = \forall x \forall y \forall u \forall v \big(\neg R(x,y,u,v)\vee (\dep(x,y)\vee \dep(u,v))\big).
    \]
    Since $\varphi$ does not have any free variables, it is 1-coherent.
    % by locality.
    In contrast, it is easy to see that $\tmodels{\calA}{\emptyset}{\varphi}$ if and only if $\calA \models_{T} \dep(x,y)\vee \dep(u,v)$, where $T=\{\,s \mid (s(x),s(y),s(u),s(v)) \in R\,\}$.
    Therefore, $\varphi$ is not FO-rewritable and hence it is not strongly $k$-coherent for any $k$.
\end{example}
\begin{theorem}\label{strong_coh}
    Let $\varphi(z_1,\ldots,z_n)
    $ be a $\UDL(\tau)$-formula of the form $\forall y_1 \ldots \forall y_l \beta$.
    The following statements are equivalent:
    \begin{enumerate}
        \item The formula $\varphi$ is strongly coherent.
        \item The formula $\varphi$ is rewritable to a universal $\FO$-sentence.
        \item The formula $\varphi$ is first-order rewritable.
    \end{enumerate}
\end{theorem}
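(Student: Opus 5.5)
The three statements form a cycle: 1 $\Rightarrow$ 2 $\Rightarrow$ 3 $\Rightarrow$ 1.

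The implication 1 $\Rightarrow$ 2 is exactly Theorem~\ref{thm:coherence-quantified}. The implication 2 $\Rightarrow$ 3 is trivial. So all the work is in 3 $\Rightarrow$ 1.

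\textbf{Reducing to a universal sentence.} Let $\varphi^*$ be a first-order sentence witnessing the rewritability of $\varphi$. As in the proof of Theorem~\ref{thm:FO rewrite implies coherent}, Lemma~\ref{lem:closed under substructures} shows that $\varphi^*$ is preserved under substructures. By the Łoś-Tarski preservation theorem, we may then assume
\[
\varphi^* = \forall w_1\dots\forall w_m\,\gamma(w_1,\dots,w_m),
\]
with $\gamma$ quantifier-free.

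\textbf{Claim.} I will show that $\varphi$ is strongly $K$-coherent for $K=m^{n+l}$, where $l$ is the number of universally quantified variables $\vec y$.

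Direction 1 $\Rightarrow$ 2 of Definition~\ref{def:strong coherence} holds by downward closure: $\calA\models_T\varphi$ means $\calA\models_{T^{\vec y}_A}\beta$. For the converse, suppose $\calA\not\models_T\varphi$. Then $(\calA,\Rel(T))\not\models\varphi^*$, so there are $a_1,\dots,a_m$ with $(\calA,\Rel(T))\models\neg\gamma(\vec a)$. Let $\calB$ be the substructure on $B=\{a_1,\dots,a_m\}$. Since $\gamma$ is quantifier-free, $(\calB,\Rel(T)\upharpoonright B)\not\models\varphi^*$, and hence $\calB\not\models_{T\upharpoonright B}\varphi$. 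This means $\calB\not\models_{(T\upharpoonright B)^{\vec y}_B}\beta$.

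\textbf{Key step: returning to $\calA$.} Set $S=(T\upharpoonright B)^{\vec y}_B$. This is a subteam of $T^{\vec y}_A$ of size at most $m^{n+l}$. I need $\calA\not\models_S\beta$. Since $\beta$ is quantifier-free and $S$ uses only values in $B$, $\beta$'s satisfaction on $S$ is the same in $\calA$ and in $\calB$. This is the quantifier-free case of Lemma~\ref{lem:closed under substructures}, together with its trivial converse: literals and dependence atoms are evaluated pointwise on the assignments, and disjunction splittings of a team with range $B$ stay within range $B$. So $S$ is a subteam of $T^{\vec y}_A$ with $|S|\le K$ and $\calA\not\models_S\beta$. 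By downward closure, either $S$ or any $K$-element superset of it inside $T^{\vec y}_A$ also fails $\beta$. If $T^{\vec y}_A$ has fewer than $K$ elements, the statement about $K$-element subteams should be read as subteams of size at most $K$, as is customary.

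I expect this last step to be the main obstacle. The point is that we must land on a subteam of the duplicate team $T^{\vec y}_A$ rather than of $T$. This works only because universal quantification over $B$ inside $\calB$ yields assignments that are literally members of $T^{\vec y}_A$, and because quantifier-free satisfaction is absolute between $\calB$ and $\calA$.
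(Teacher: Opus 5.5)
Your proposal is correct and follows essentially the same route as the paper's proof. The steps match: \L{}o\'{s}--Tarski gives a universal rewriting, you pick a falsifying tuple $a_1,\dots,a_m$, pass to the substructure on $B$, bound the duplicate team $(T\upharpoonright B)^{\vec y}_B$ by $m^{n+l}$, and transfer the failure of $\beta$ back to $\calA$ with Lemma~\ref{lem:closed under substructures}. For that last transfer you only need the contrapositive of the lemma, since a team with range $B$ is its own restriction to $B$, so the ``trivial converse'' you mention is not actually used.
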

\begin{proof}
    The implication ``\itemstyle{1.} $\Longrightarrow$ \itemstyle{2.}'' follows from Theorem \ref{thm:coherence-quantified}.
    The implication ``\itemstyle{2.} $\Longrightarrow$ \itemstyle{3.}'' is trivial.
    We show the implication ``\itemstyle{3.} $\Longrightarrow$ \itemstyle{1.}''.

    Let $\varphi^*$ be a $\FO$-sentence witnessing the first-order rewritability of $\varphi$.
    By Lemma \ref{lem:closed under substructures}, $\varphi^*$ is closed under substructures.
    Hence, by the Łoś-Tarski preservation theorem, $\varphi^*$ is equivalent to a universal $\FO$-sentence, i.e., to a $\FO$-sentence of the form
    $\forall x_1\ldots \forall x_m \theta(R)$, where $m\geq 1$ and $\theta$ is a quantifier-free $\FO$-formula.

	We will show that $\varphi$ is $m^{n+l}$-coherent.
    By the downward closure of $\varphi$, it suffices to show the following:
    if $\calA$ is a $\tau$-structure and $T$ is a team with domain $\{z_1,\ldots,z_n\}$ such that $\calA \not \models_T \varphi$, then there is a subteam $S$ of $T^{\vec{y}}_A$ such that $|S|\leq m^{n+l}$ and $\tnotmodels{\calA}{S}{\beta}$.
    Towards this goal, let $\calA$ be a $\tau$-structure and let $T$ be a team such that $\tmodels{\calA}{T}{\varphi}$.
    Since $\varphi$ is first-order rewritable to $\forall x_1\ldots \forall x_m \theta(R)$, we have that $(\calA,\Rel(T))\not \models \forall x_1\ldots \forall x_m \theta(R)$.
    Therefore, there are elements $a_1,\ldots,a_m$ from the universe $A$ of $\calA$ such that $(\calA,\Rel(T))\models \neg \theta(a_1,\ldots,a_m)$.
    Let $\calB$ be a substructure of $\calA$ with universe $B = \{a_1,\ldots,a_m\}$.
    Since $\theta$ is quantifier-free, we have that $(\calB, \Rel(T)\upharpoonright B)\models \neg \theta(a_1,\ldots,a_m)$.
    %, where $(\calB, \Rel(T)\upharpoonright B)$ is the substructure of $(\calA,\Rel(T))$ generated by $B$.
    Thus, $(\calB, \Rel(T)\upharpoonright B) \not\models \forall x_1\ldots \forall x_m \theta(x_1,\ldots,x_m)$, hence $\tnotmodels{\calB}{S}{\varphi}$, where $S$ is the subteam of $T$ associated with $\Rel(T)\upharpoonright B$.
    By the semantics of the universal quantifier we get that $\tnotmodels{\calB}{S^{\vec{y}}_B}{\theta}$.
    Now, note that $|S^{\vec{y}}_B|\leq m^{n+l}$, since $|\Rel(T)\upharpoonright B| \leq m^n$.
    By Lemma \ref{lem:closed under substructures}, we have that $\tnotmodels{\calA}{S^{\vec{y}}_B}{\beta}$.
    This concludes the proof that $\varphi$ is strongly $m^{n+l}$-coherent.
\end{proof}
Note that, together with Proposition \ref{prop:strong coherence implies coherence}, Theorem \ref{strong_coh} implies  Theorem \ref{thm:FO rewrite implies coherent}.

As the translation to SNP-sentences extends to  universal $\DL$-formulas, the following holds.

%\arne{is it important that we miss the $(\tau)$ in the next expression for $\UDL$?}
\begin{theorem}
    Suppose $\varphi$ is a $\UDL(\tau)$-formula and $k \in \oN$.
    Then following statements are equivalent:
    \begin{enumerate}
        \item The formula $\varphi$ is strongly $k$-coherent.
        \item The formula $\varphi$ is strongly $k$-coherent in the finite.
    \end{enumerate}
    \end{theorem}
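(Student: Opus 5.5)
The direction from strong $k$-coherence to strong $k$-coherence in the finite is immediate, since the latter only quantifies over finite structures. For the converse, I would mirror the proof of Theorem~\ref{lem:finite coherence = coherence}. The one change is that the subteams to be controlled now live in the duplicate team $T^{\vec{x}}_A$ rather than in $T$ itself. Let $\varphi = \forall \vec{x}\theta$ with $\theta\in\DLQF(\tau)$ and free variables $z_1,\ldots,z_n$. Assume towards a contradiction that $\varphi$ is strongly $k$-coherent in the finite, and that there are a (necessarily infinite) structure $\calA$ and a team $T$ with the following two properties: $\tnotmodels{\calA}{T}{\varphi}$, yet $\tmodels{\calA}{S}{\theta}$ for every $k$-element subteam $S\subseteq T^{\vec{x}}_A$.

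The first step is to find a finite counterexample substructure. By Proposition~\ref{prop:qf-strict}, $\varphi$ is rewritable to an $\SNP$-sentence $\varphi^*$, so $(\calA,\Rel(T))\not\models\varphi^*$. Lemma~\ref{lem:SNP substructure} then yields a finite substructure $\calB$ of $\calA$ with universe $B$ such that $(\calB,\Rel(T)\upharpoonright B)\not\models\varphi^*$. By the rewriting, this gives $\tnotmodels{\calB}{T\upharpoonright B}{\varphi}$.

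The second step, which I expect to be the only real point of care, is to transfer the hypothesis about small subteams down to $\calB$. I will use the identity
\[
(T\upharpoonright B)^{\vec{x}}_B = T^{\vec{x}}_A \upharpoonright B ,
\]
which is the iterated form of the identity used in the proof of Lemma~\ref{lem:closed under substructures}. Let $S$ be a $k$-element subteam of $(T\upharpoonright B)^{\vec{x}}_B$. By the identity, $S$ is also a $k$-element subteam of $T^{\vec{x}}_A$ whose values all lie in $B$. Hence $\tmodels{\calA}{S}{\theta}$ by assumption. Applying Lemma~\ref{lem:closed under substructures} to the quantifier-free $\theta$, and using $S\upharpoonright B = S$, we get $\tmodels{\calB}{S}{\theta}$.

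Thus the finite structure $\calB$ and the team $T\upharpoonright B$ satisfy condition~2 of Definition~\ref{def:strong coherence} but not condition~1. This contradicts strong $k$-coherence in the finite, which finishes the proof. If $T\upharpoonright B$ would need to have range exactly $B$, this is automatic from the definition of restriction. The degenerate case where $T^{\vec{x}}_A$ has fewer than $k$ elements needs no separate treatment, since downward closure (condition~2 read for subteams of size at most $k$) handles it in the same way as in the earlier proofs.
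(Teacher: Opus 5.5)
Your proof is correct and follows essentially the same route as the paper. Both arguments use the $\SNP$-rewriting and Lemma~\ref{lem:SNP substructure} to obtain a finite substructure $\calB$ with $\tnotmodels{\calB}{T\upharpoonright B}{\varphi}$, and then transfer the small-subteam hypothesis via $(T\upharpoonright B)^{\vec{x}}_B \subseteq T^{\vec{x}}_A$ and Lemma~\ref{lem:closed under substructures}. Your explicit restriction identity and the observation $S\upharpoonright B = S$ simply spell out steps the paper leaves implicit.
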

% Nicolas: fix notation in the following proof
\begin{proof}
    Clearly, strong coherence implies strong coherence in the finite.
    Thus, it suffices to show the implication ``\itemstyle{2.} $\Longrightarrow$ \itemstyle{1.}''.
    Towards a contradiction, assume that $\varphi$ is strongly $k$-coherent in the finite and that there is an infinite $\tau$-structure $\calA$ and a team $T$ of $\calA$ such that $\tnotmodels{\calA}{T}{\varphi}$, but for all $k$-element subteams $S$ of $T^{\vec{x}}_A$, it holds that $\tmodels{\calA}{S}{\theta}$.
    By Proposition~\ref{prop:qf-strict}, $\varphi$ is rewritable to some $\SNP$-sentence $\varphi^*$.
    Since $(\calA, \Rel(T)) \not\models \varphi^*$, Lemma \ref{lem:SNP substructure} implies that there exists a finite substructure $\calB$, with $B \subseteq A$, such that $(\calB,\Rel(T)\upharpoonright B) \not\models \varphi^*$.
    This implies that $\tnotmodels{\calB}{T \upharpoonright B}{\varphi}$.
    Let $S$ be a $k$-element subteam of $(T \upharpoonright B)^{\vec{x}}_B$.
    Note that $S$ is a $k$-element subteam of $T^x_A$, hence, by our assumption about $\calA$ and $\varphi$, we have that $\tmodels{\calA}{S}{\theta}$.
    Therefore, by Lemma \ref{lem:closed under substructures}, we have that $\tmodels{\calB}{S}{\theta}$.
    Since $\varphi$ is strongly $k$-coherent in the finite, we have that $\tmodels{\calB}{T \upharpoonright B}{\varphi}$ and, thus, we have arrived at a contradiction.
%Now the existence of finite $\calA\upharpoonright B$ and $ X\upharpoonright B$ contradicts the assumption that $\varphi$ is strongly $k$-coherent in the finite.
\end{proof}

We close this chapter with two further results on coherence of quantified formulas. We first generalize coherence from finite to infinite cardinals.

\begin{definition}[Cardinal coherence]\label{def:cardinal-coh}
    Suppose $\varphi(x_1, \dots, x_n)$ is a $\DL(\tau)$-formula and $\lambda$ an arbitrary cardinal number.
    Then $\varphi$ is \emph{$\lambda$-coherent} if  for every $\tau$-structures $\calA$ and every team $T$ with domain $\{x_1,\ldots,x_n\}$ and range $A$, the following are equivalent:
    \begin{enumerate}
        \item $\tmodels{\calA}{T}{\varphi}$.
        \item For $Y\subseteq T$, such that $|Y|<\lambda$ it holds that $\calA \models_Y \varphi$.
    \end{enumerate}
\end{definition}

The next example shows that there are formulas with the existential quantifier which are cardinal incoherent, i.e, there is no cardinal $\lambda$ such that the formulas are $\lambda$-coherent. This example is given also by Ciardelli and Grilletti~\cite{CIARDELLI2022103155} in the context of inquisitive first-order logic.
\begin{example}\label{ex:cardinal incoherence}
    The following dependence formula is not $\lambda$-coherent for any $\lambda$:
    \[
        \varphi = \exists z(\dep(z)\wedge \neg z=x).
    \]
    Indeed, if $A$ is any non-empty set and $T$ is the team of all assignments $s$ with domain $\{x\}$ and range $A$, then $T$ does not satisfy $\varphi$ in the model $\calA$ (of the empty vocabulary) but any proper subteam does.
    Namely, if $S \subsetneq T$ misses $t\in T$, then $\tmodels{\calA}{S}{\varphi}$, as $t(x)\in A$ gives a value that can serve as the interpretation of $z$ in the formula $\varphi$.
\end{example}

The next theorem shows that, contrary to the previous example, universal formulas are always $\omega$-coherent.

\begin{theorem}
    Suppose $\varphi$ is a $\UDL$-formula.
    Then $\varphi$ is $\omega$-coherent.
\end{theorem}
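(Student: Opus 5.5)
Being $\omega$-coherent means: if every finite subteam $Y\subseteq T$ satisfies $\varphi$ in $\calA$, then $T$ does too; the converse is downward closure. I will derive this from the $\SNP$-rewriting (Proposition~\ref{prop:qf-strict}) together with a compactness argument in the style of Lemma~\ref{lem:SNP substructure}. An induction on $\varphi$ looks unpromising, because the disjunction case would require gluing splits of finite subteams into one split of $T$. That gluing is exactly what compactness provides.

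Let $\varphi(x_1,\dots,x_n)$ be a $\UDL(\tau)$-formula. By Proposition~\ref{prop:qf-strict}, fix an $\SNP$-sentence $\varphi^* = \exists \vec S\,\forall \vec y\,\psi(\vec y,\vec S,R)$ with
\[
\calA\models_T\varphi \iff (\calA,\Rel(T))\models\varphi^* .
\]
Inspecting the proof of Theorem~6.2 in~\cite{Vaananen-book}, the rewriting should be uniform in the relation $R$: $R$ occurs in $\psi$ only negatively, as guards ``if $R(\vec u)$ then $\ldots$''. I would verify this, or else use monotonicity directly: if $(\calA,P)\models\varphi^*$ for $P\supseteq \Rel(Y)$, this does not immediately give anything for $Y$. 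So the cleaner plan is to work directly with $\Sigma$ as follows.

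Expand the vocabulary with a constant $\underline a$ for each $a\in A$. Let $\Sigma$ consist of three parts: the atomic diagram of $(\calA,\Rel(T))$, meaning all atomic and negated atomic sentences over $\tau\cup\{R\}$ and all equalities and inequalities between constants; and the instances $\psi(\underline{\vec a},\vec S,R)$ for all tuples $\vec a$ from $A$. Here the $\vec S$ are treated as relation symbols of the expanded vocabulary. A model of $\Sigma$ contains a copy of $(\calA,\Rel(T))$ with an interpretation of $\vec S$ satisfying $\forall\vec y\,\psi$ on that copy, so restricting gives $(\calA,\Rel(T))\models\varphi^*$ (universal sentences are preserved under substructures), hence $\calA\models_T\varphi$. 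By compactness it therefore suffices to show $\Sigma$ is finitely satisfiable.

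Take a finite $\Sigma'\subseteq\Sigma$, mentioning finitely many constants with elements $B_0\subseteq A$. Let $Y$ be the finite subteam of $T$ consisting of those $s\in T$ with $(s(x_1),\dots,s(x_n))\in \Rel(T)\upharpoonright B_0$. This is finite because $B_0$ is finite and $T$ has domain $\{x_1,\dots,x_n\}$ (I assume $T$ is identified with $\Rel(T)$). By hypothesis $\calA\models_Y\varphi$, and Lemma~\ref{lem:closed under substructures} applied to the substructure $\calB$ on $B_0$ gives $\calB\models_{Y}\varphi$. Note $Y=Y\upharpoonright B_0$ and $\Rel(Y)=\Rel(T)\upharpoonright B_0$. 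Hence $(\calB,\Rel(T)\upharpoonright B_0)\models\varphi^*$, and the witnessing $\vec S$ together with the obvious constants satisfies $\Sigma'$. Here I use that the diagram sentences in $\Sigma'$ only involve $B_0$, on which $(\calB,\Rel(Y))$ agrees with $(\calA,\Rel(T))$.

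The main obstacle is the bookkeeping in this last step: the finite subteam must agree with $\Rel(T)$ on all of $B_0$, not just contain some tuples. Restricting to $B_0$ via Lemma~\ref{lem:closed under substructures} is precisely what secures this. Once that holds, the argument is essentially the ``2.~$\Rightarrow$~1.'' direction of Lemma~\ref{lem:SNP substructure}. In fact it can be cited directly: every finite substructure $(\calB,\Rel(T)\upharpoonright B)$ satisfies $\varphi^*$ by the step above, so Lemma~\ref{lem:SNP substructure} yields $(\calA,\Rel(T))\models\varphi^*$, i.e.\ $\calA\models_T\varphi$.
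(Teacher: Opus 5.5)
Your proposal is correct and is essentially the paper's argument. Both use the $\SNP$-rewriting of Proposition~\ref{prop:qf-strict}, then Lemma~\ref{lem:SNP substructure} to reduce to finite substructures $\calB$, and Lemma~\ref{lem:closed under substructures} to move satisfaction between $\calA$ and $\calB$ for the finite subteam $T\upharpoonright B$. The paper simply argues contrapositively: a failing $T$ yields a finite $B$ with $\calB\not\models_{T\upharpoonright B}\varphi$, hence $\calA\not\models_{T\upharpoonright B}\varphi$. Your detour about $R$ occurring only negatively and your hand-built $\Sigma$ are unnecessary, since your closing direct citation of Lemma~\ref{lem:SNP substructure} already completes the proof.
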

\begin{proof}
    Assume $\calA$ and $T$ are such that $\tnotmodels{\calA}{T}{\varphi}$.
    By Lemma \ref{lem:SNP substructure}, there exists a finite substructure $\calB$ of $\calA$ such that $\tnotmodels{\calB}{T\upharpoonright B}{\varphi}$.
    By Lemma \ref{lem:closed under substructures}, we must have $\tnotmodels{\calA}{T\upharpoonright B}{\varphi}$ which shows that claim.
\end{proof}

\section{On the complexity of deciding coherence}\label{sec:complexity}
In this section, we explore algorithmic aspects of coherence in dependence logic.
We begin by introducing the main decision problem concerning coherence.

\begin{definition} \label{def:coh-l-problem}
    Let $\mathcal L$ be a collection of $\DL$-formulas.
    \begin{itemize}
    \item The \emph{\COHERENCE problem for $\mathcal L$} is the following decision problem: given an $\mathcal L$-formula $\varphi$ and a positive integer $k$, is $\varphi$ $k$-coherent?
    % \timon{$k$-coherent? Otherwise there is no need for a $k$ in the input.}
    \item Let $k$ be a positive integer. The \emph{\kCOHERENCE problem for $\mathcal L$} is the following decision problem: given an $\mathcal L$-formula $\varphi$, is $\varphi$ $k$-coherent?
    \end{itemize}
    The \emph{\COHERENCE problem for $\mathcal L$ in the finite} and the \emph{\kCOHERENCE problem for $\mathcal L$ in the finite} are defined in an analogous manner.
\end{definition}

Clearly, a complexity-theoretic upper bound for \COHERENCE implies the same upper bound for \kCOHERENCE, for every fixed $k\geq 1$. Similarly, a complexity-theoretic lower bound for \kCOHERENCE for some $k\geq 1$ implies the same lower bound for \COHERENCE.
The next result shows that deciding \kCOHERENCE for $\DL$-formulas is a highly undecidable problem.

\begin{theorem} \label{thm:coh-non-arithm} \label{thm:k-coherent-non-arith}
    For every fixed natural number $k\geq 2$, the \kCOHERENCE problem for dependence logic $\DL$ is of non-arithmetical complexity.
\end{theorem}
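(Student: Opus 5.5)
I propose to reduce from the validity problem of second-order logic, or more conveniently from the set of $\DL$-sentences valid over all structures (equivalently, from the complement of satisfiability for $\Sigma^1_1$ / existential second-order sentences). It is well known that the validity problem for $\DL$-sentences is non-arithmetical: $\DL$ sentences express exactly $\Sigma^1_1$ properties, and validity of $\Sigma^1_1$-sentences is already non-arithmetical (it is $\Pi_2$-hard in the analytical sense, indeed complete for $\Pi_2$ in second-order arithmetic setting). So it suffices to give a computable map $\psi \mapsto \varphi_\psi$ from $\DL$-sentences to $\DL$-formulas such that $\psi$ is valid if and only if $\varphi_\psi$ is $k$-coherent.

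The construction I will use combines a fixed non-$k$-coherent formula with $\psi$ by disjunction. Fix a formula $\chi(x,y)$ that is $2$-coherent-failing in a robust way for every $k\ge 2$; the natural candidate is a disjunction of dependence atoms, e.g.\ $\chi = \dep(x,y)\lor\dep(x,y)\lor\cdots$ with $k$ disjuncts, or more simply the formula from Example~\ref{ex:cardinal incoherence}, $\exists z(\dep(z)\wedge \neg z=x)$, which is not $\lambda$-coherent for any $\lambda$ (in particular not $k$-coherent) as soon as the universe is finite and large enough, while its failures occur only on teams covering the whole universe. Define $\varphi_\psi = \psi \lor \chi$. Since $\psi$ is a sentence, by locality and the flatness of sentences with respect to the team ($\calA\models_T\psi$ iff $\calA\models_{\{\emptyset\}}\psi$ for nonempty $T$, and the empty team satisfies everything), we get: if $\calA\models\psi$ then $\calA\models_T \psi\lor\chi$ for every $T$ (take $T_1=T$, $T_2=\emptyset$); if $\calA\not\models\psi$ then $\calA\models_T\varphi_\psi$ iff $T=\emptyset$ or $\calA\models_T\chi$ (the split must put everything into $\chi$). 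Hence if $\psi$ is valid, $\varphi_\psi$ is equivalent to a trivially true formula and is $k$-coherent. Conversely, if $\psi$ fails in some $\calA$, I want a team witnessing failure of $k$-coherence of $\chi$ over that same $\calA$ (or over a structure also falsifying $\psi$).

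This last point is the main obstacle: the counterexample to coherence of $\chi$ must live in a structure where $\psi$ is false, and $\psi$'s countermodels may be of arbitrary shape and size. To handle it, I will choose $\chi$ depending only on equality, so that any structure with at least $k+1$ (or two) elements supplies a counterexample team, and then adjust $\psi$ to $\psi' = \psi \lor \neg\exists$-``at least $k+1$ elements'' suitably, or alternatively modify the reduction so that countermodels of $\psi$ may be assumed to be infinite or large (e.g.\ replace $\psi$ by $\psi\lor \theta_{\le k}$ where $\theta_{\le k}$ is a first-order sentence saying the universe has at most $k$ elements; validity of $\psi$ over structures of size $>k$ is still non-arithmetical). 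With $\chi=\exists z(\dep(z)\wedge\neg z=x)$ and $T$ the full team on $\{x\}$ over a finite universe, we have $T\not\models\chi$ yet every proper subteam, in particular every $k$-element subteam, satisfies $\chi$; for infinite countermodels I will instead use a $\chi$ like $\dep(x,y)\lor\dep(x,y)$ (not $k$-coherent via a team of $k+1$ assignments with a common $x$-value and distinct $y$-values, which exists whenever $|A|\ge k+1$), which handles all sizes uniformly. This is where $k\ge2$ is needed: for $k$ disjuncts of $\dep(x,y)$ one needs $k+1$ distinct $y$-values, while $1$-coherence behaves differently. Verifying that $\chi$'s failure is witnessed in every sufficiently large structure and that the restriction to large structures preserves non-arithmeticity completes the reduction.
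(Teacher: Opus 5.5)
Your argument is correct and, once you commit to your final choices, it is essentially the paper's proof: reduce from validity of $\DL$-sentences in models of size $>k$, send $\psi$ to $(\psi\vee\theta_{\le k})\vee\chi$, and take $\chi$ to be a $k$-fold disjunction of dependence atoms that fails $k$-coherence in every structure with at least $k+1$ elements. The paper uses the constancy atoms $\bigvee_{i=1}^k\dep(x)$, and your $\bigvee_{i=1}^k\dep(x,y)$, witnessed by a common $x$-value and $k+1$ distinct $y$-values, behaves the same way. Make sure the number of disjuncts is at least $k$: the two-disjunct $\dep(x,y)\lor\dep(x,y)$ you write in passing is already $3$-coherent, so it cannot serve as the gadget for $k\ge 3$. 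Also, contrary to your remark, nothing in the argument uses $k\ge 2$, since the same construction works for $k=1$.
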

\begin{proof}
    Consider the vocabulary $\tau$ consisting of just one binary predicate symbol.
    Let $\theta$ be the first-order sentence $\forall x_1\ldots\forall x_{k+1}\bigvee_{1\le i<j\le k+1}x_i=x_j$.
    We show that for all dependence logic sentences $\varphi$ over the vocabulary $\tau$, we have that $\varphi$ is valid in models of size $> k$ if and only if the formula $\psi\coloneqq(\theta\vee\varphi)\vee \bigvee_{i=1}^k\dep(x)$ is $k$-coherent.
    \begin{description}
    \item[``$\Longrightarrow$''] Suppose $\varphi$ is valid in models of size $> k$.
    Then $\theta\vee\varphi$ is valid, for given any model of the vocabulary $\tau$, either it has at most $k$ elements, and thereby satisfies $\theta$, or it has at least $k+1$ elements, and thereby satisfies $\varphi$.
    Hence $(\theta\vee\varphi)\vee \bigvee_{i=1}^k\dep(x)$ is $k$-coherent for all $k$, as the formula is valid.
    \item[``$\Longleftarrow$''] Suppose $M\not\models\varphi$, where $M$ has at least $k+1$ elements.
    Then $(\theta\vee\varphi)\vee \bigvee_{i=1}^k\dep(x)$ is not $k$-coherent because in the model $M$ it is equivalent to (i.e., satisfied by the same teams as) $\bigvee_{i=1}^k\dep(x)$, which is not $k$-coherent.
    Since validity in models of size $>k$ is not arithmetical (Väänänen~\cite[p. 150]{Vaananen-book}), we are done. \qedhere
\end{description}
\end{proof}
In fact, validity in dependence logic is $\Pi_2$-complete, hence much worse than non-arithmetical~\cite[p.\ 150]{Vaananen-book}.

The next theorem concerns  coherence in the finite.  Recall that a decision problem $Q$ is
\emph{recursively enumerable (r.e.)} if there is an algorithm that enumerates the ``yes'' inputs to this problem. We write $\RE$ for the collection of all recursively enumerable decision problems. We also write $\coRE$ for the collection of all decision problems whose complement is in $\RE$.
Furthermore, a problem $Q$ is \emph{$\RE$-complete} if it is in $\RE$, and every problem in $\RE$ can be reduced to $Q$ via a many-one recursive reduction. The notion of $\coRE$-completeness is defined analogously.

%further, $Q$ is \emph{r.e.-complete} if every r.e.\ decision problem can be reduced to $Q$ via a \textcolor{red}{many-one recursive reduction}. %Turing
%eduction
 %We say that $Q$ is \emph{co-r.e.}
%\emph{$\coRE$-complete}
%if the complement of $Q$ is recursively enumerable.
% We write $\coRE$ for the collection of all co-r.e. decision problems.

%or \emph{co-r.e.-complete} or r.e.-complete
\begin{theorem} \label{thm:coh-und-finite}
    The following statements are true about dependence logic $\DL$.
    \begin{enumerate}
    \item The \COHERENCE problem for $\DL$  in the finite is {in}  $\coRE$.
    \item For every fixed $k\geq 2$, the \kCOHERENCE problem for $\DL$ in the finite is $\coRE$-complete.
    In fact, it is $\coRE$-complete  even for $\DL$-formulas of the form $\psi_1\vee\psi_2$, where $\psi_1$ is a $\FO$-formula and $\psi_2$ is a disjunction of $k$ constancy atoms.
    \end{enumerate}
\end{theorem}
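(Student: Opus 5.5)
For item 1, we show that the complement is recursively enumerable. A pair $(\varphi,k)$ fails to be $k$-coherent in the finite exactly when there is a finite $\tau$-structure $\calA$ and a team $T$ with domain the free variables of $\varphi$ and range $A$ such that every $k$-element subteam $S\subseteq T$ satisfies $\calA\models_S\varphi$ while $\calA\not\models_T\varphi$. All such witnesses $(\calA,T)$ can be enumerated effectively, up to isomorphism, in order of increasing size. For a given finite witness, model checking $\calA\models_T\varphi$ is decidable, since the team semantics of Definition~\ref{def:semantics} over a finite structure only quantifies over finitely many subteams and supplement functions. Hence the test is decidable for each candidate, and a dovetailed search halts exactly on the non-coherent inputs. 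This places the problem in $\coRE$, uniformly in $k$.

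For the hardness part of item 2, we reduce from the complement of finite satisfiability of $\FO$-sentences. By Trakhtenbrot's theorem, finite validity is $\coRE$-complete. Fix $k\ge 2$, and let $\theta_k$ be the sentence $\forall x_1\ldots\forall x_{k+1}\bigvee_{i<j}x_i=x_j$ from the proof of Theorem~\ref{thm:k-coherent-non-arith}, which says that the model has at most $k$ elements. Given an $\FO$-sentence $\chi$, we map it to
\[
  \psi \coloneqq (\theta_k\vee\chi)\vee\textstyle\bigvee_{i=1}^k\dep(x).
\]
Here $\psi_1=\theta_k\vee\chi$ is a first-order sentence, so it is flat and satisfied by a team exactly when the structure satisfies it. We then argue as in Theorem~\ref{thm:k-coherent-non-arith}, restricted to finite models.

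If $\chi$ holds in all finite models of size $>k$, then $\psi_1$ holds in every finite structure, so $\psi$ is satisfied by every team and is trivially $k$-coherent in the finite. Conversely, suppose $M$ is finite with $|M|\ge k+1$ and $M\not\models\chi$. Then $M$ also falsifies $\theta_k$, so in $M$ the formula $\psi$ is equivalent to $\bigvee_{i=1}^k\dep(x)$, which holds in a team iff $x$ takes at most $k$ values. Take a team with $k+1$ distinct values of $x$, which exists since $|M|\ge k+1$. Every $k$-element subteam satisfies $\psi$ but the whole team does not, so $\psi$ is not $k$-coherent in the finite.

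The point needing care is that the source problem must be ``validity in finite models of size $>k$''. This is still $\coRE$-hard by reducing from finite validity: map $\chi$ to $\chi\vee\theta_k$-style padding, or simply note that finite validity of $\chi$ is equivalent to finite validity in size $>k$ of $\chi'$, where $\chi'$ relativizes the check of small models using a sentence decided by brute force. Alternatively, first decide $\chi$ on the finitely many structures of size $\le k$; if $\chi$ fails there, output a fixed non-coherent instance. This case handling is the main obstacle I expect, together with checking that the constancy atoms $\dep(x)$ fall within the fragment stated in item 2. Since the reduction is computable and many-one, $\coRE$-completeness follows for every fixed $k\ge2$.
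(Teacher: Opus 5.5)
Your proof is correct and follows the paper's argument: the same formula $(\theta_k\vee\chi)\vee\bigvee_{i=1}^k\dep(x)$ and Trakhtenbrot's theorem. You also spell out two things the paper leaves implicit, namely the membership argument and the passage from finite validity to validity in finite models of size $>k$. Of your three fixes for that passage, the padding $\chi\vee\theta_k$ goes the wrong way, since it reduces size-$>k$ validity to finite validity rather than conversely. The brute-force check on the finitely many structures of size $\le k$ does work, provided the fixed output instance also has the required shape $\psi_1\vee\psi_2$, e.g.\ $(\theta_k\vee\forall y\,\neg y=y)\vee\bigvee_{i=1}^k\dep(x)$.
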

%\nicolas{where do $\psi_1$ and $\psi_2$ appear in the proof?}\jouko{Corrected it.}
\begin{proof}
    Let $\theta$ be the first-order sentence $\forall x_1\ldots\forall x_{k+1} \bigvee_{1\le i<j \le k+1} x_i = x_j$.
    As in the proof of Theorem~\ref{thm:coh-non-arithm}, we may observe that if $\varphi$ is a first-order sentence, then $\varphi$ is valid in finite models of size $> k$ if and only if the formula $\psi\coloneqq\theta\vee\varphi\vee \bigvee_{i=1}^k\dep(x)$ is $k$-coherent.
    Since first-order validity in finite models of size $> k$ is a $\coRE$-complete problem (this result is known as Trakhtenbrot's theorem~\cite{MR33784}), we are done.
\end{proof}
With regards to universal $\DL$-formulas, Theorem  \ref{lem:finite coherence = coherence} tells that $k$-coherence coincides with $k$-coherence in the finite. Thus, from Theorems \ref{lem:finite coherence = coherence}
and \ref{thm:coh-und-finite}, we obtain the following result.

\begin{corollary} \label{thm:coh-core}
    The \COHERENCE problem for $\UDL$ is in $\coRE$.
\end{corollary}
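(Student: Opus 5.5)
The plan is to reduce the statement to the two results just established. By Theorem~\ref{lem:finite coherence = coherence}, for every $\UDL(\tau)$-formula $\varphi$ and every $k\in\oN$, $\varphi$ is $k$-coherent if and only if $\varphi$ is $k$-coherent in the finite. Hence the \COHERENCE problem for $\UDL$ has exactly the same yes-instances $(\varphi,k)$ as the \COHERENCE problem for $\UDL$ in the finite. The latter is the restriction of the \COHERENCE problem for $\DL$ in the finite to the syntactically decidable set of $\exists$-free inputs. Since checking that an input formula contains no existential quantifier is trivially computable, membership in $\coRE$ transfers from item~1 of Theorem~\ref{thm:coh-und-finite}.

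Item~1 of Theorem~\ref{thm:coh-und-finite} is stated, but its proof as written only addresses item~2. So I would also make the $\coRE$ upper bound explicit, at least for $\UDL$. The complement consists of pairs $(\varphi,k)$ for which there is a finite structure $\calA$ and a team $T$ with domain the free variables of $\varphi$ and range $A$ such that $\calA\not\models_T\varphi$ while $\calA\models_S\varphi$ for all $k$-element subteams $S\subseteq T$. Because $\calA$ is finite, $T$ is finite. The semantics of Definition~\ref{def:semantics} is then decidable by brute force over the finitely many splittings, supplement functions, and duplicate teams. The enumeration procedure runs through all finite structures (up to isomorphism, over the finite vocabulary occurring in $\varphi$), all teams over them, and all $k$-subsets, checking each candidate. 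It halts exactly when a witness to non-coherence in the finite exists. This shows the complement is in $\RE$.

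The only delicate point is ensuring that $k$-coherence in the finite, not $k$-coherence, is what the search detects; this is precisely where Theorem~\ref{lem:finite coherence = coherence} is needed. That theorem in turn rests on the $\SNP$-rewriting of Proposition~\ref{prop:qf-strict} and the compactness argument of Lemma~\ref{lem:SNP substructure}, which apply only to $\exists$-free formulas. No additional obstacle is expected, since both ingredients are given earlier in the paper.
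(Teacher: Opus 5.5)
Your proposal is correct and matches the paper's argument, which obtains the corollary by combining Theorem~\ref{lem:finite coherence = coherence} (for $\UDL$, $k$-coherence coincides with $k$-coherence in the finite) with item~1 of Theorem~\ref{thm:coh-und-finite}. Your explicit enumeration over finite structures, finite teams and $k$-element subteams, with brute-force model checking, fills in a step the paper leaves unstated: its proof of Theorem~\ref{thm:coh-und-finite} only argues hardness, so the $\coRE$ upper bound is never spelled out there.
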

Since \COHERENCE for $\UDL$ is in $\coRE$, it follows that for every fixed $k\geq 2$, the \kCOHERENCE problem for $\UDL$ is also in $\coRE$.
It is an open question whether anyone of these problems is decidable or undecidable.
In fact, the question is open even for the collection $\DLQF$ of quantifier-free dependence logic formulas.

\subsection{Coherence in propositional dependence logic}\label{sec:prop-dep-logic}
% \nicolas{Do we still want to state the model checking, validity and implication problem results somewhere in the beginning of this section?}
In this section, we consider the coherence problem in the context of propositional dependence logic $\PDL$.
%or, equivalently,
%for quantifier-free formulas of dependence logic $\DL$ over the empty vocabulary and
%for teams over a binary universe\juha{In PDL we have atoms $p$ and $\neg p$ which could be expressed in FO by $x=1/0$, so I think we need the constants 0,1 or a unary relation symbol P to make the connection between PDL and QF-DL.} Our main result pinpoints the computational compexity of this problem.
We begin by introducing the syntax and semantics of $\PDL$.
The formulas of $\PDL$ are defined by the grammar
\[
	\psi \coloneqq
    p \mid
    \lnot p \mid
    \dep(\vec p,q) \mid
    \psi \land \psi \mid
    \psi\lor\psi,
\]
where $\vec p = p_1, \dots, p_n$ is a vector of propositional variables and $p, q$ are single propositional variables.

A \emph{propositional team} is a team $T$ with domain a set $V$ of propositional variables and range $\{0,1\}$.
We now revisit the semantics of dependence logic from a propositional point of view.
Notice that since the model in $\PDL$ is fixed, we drop it from the satisfaction relation and write $T \models \varphi$ instead of $\tmodels{\{0, 1\}}{T}{\varphi}$.
\begin{definition}
    For a $\PDL$-formula $\varphi$ and a propositional team $T$ whose domain contains the set of propositional variables occurring in $\varphi$, the satisfaction relation $T \models \varphi$ is defined as follows, where $\vec p=p_1,\dots,p_n$ is a vector of propositional variables, and $p,q$ are single propositional variables:
    \begin{tabbing}
    $T \models \dep(\vec y,z)$\;\=\;if\;\=\kill
    $T \models p$ \> if \> for all $s\in T$, we have $s(p)=1$;\\
    $T \models \lnot p$ \> if \> for all $s\in T$, we have $s(p)=0$;\\
    $T \models \dep(\vec p,q)$ \> if \> for all $s_1,s_2\in T$: $s_1(p_j) = s_2(p_j)$ for all $j$ implies $s_1(q)=s_2(q)$;\\
    $T \models \varphi_1 \land \varphi_2$ \> if \> $T \models \varphi_1$ and $T \models \varphi_2$;\\
    $T \models \varphi_1 \lor \varphi_2$ \> if \> there are teams $T_1$, $T_2$ with $T= T_1\cup T_2$ and $T_1 \models \varphi_1$ and $T_2 \models \varphi_2$.
    \end{tabbing}
\end{definition}

We now introduce the notion of coherence for $\PDL$ formulas.
\begin{definition}[Propositional coherence] \label{def:prop-coh}
    Suppose $\varphi(p_1, \dots, p_n)$ is a $\PDL$-formula and $k \in \oN$.
    We say that $\varphi$ is \emph{$k$-coherent} if for every propositional team $T$ with domain $\{p_1, \ldots, p_n\}$, the following are equivalent:
    \begin{enumerate}
        \item $T \models\varphi$.
        \item For every $k$-element subteam $S \subseteq T$, it holds that $S \models \varphi$.
    \end{enumerate}
    We say that $\varphi$ is \emph{coherent} if there is a $k\geq 1$ such that $\varphi$ is $k$-coherent.
\end{definition}

Let $\varphi(p_1,\ldots,p_n)$ be a $\PDL$-formula with $n$ variables.
Since every propositional team with domain $\{p_1,\ldots,p_n\}$ has size at most $2^n$, we have that $\varphi(p_1,\ldots,p_n)$ is $2^n$-coherent, hence it is $k$-coherent for every $k\geq 2^n$.
Thus, the \kCOHERENCE problem for $\PDL$-formulas with $n$ variables is interesting only when $k \leq 2^n-1$.

% \timon{currently, the \COHERENCE problem asks for $k$-coherence with $k$ in the input. But here, we mean the problem of asking if $\varphi$ is coherent at all, right?}

\begin{example}
Note that, for each $k\geq 2$, there is a $\PDL$-formula $\theta_k$ with some number of variables such that $\theta_k$ is not $k$-coherent, hence it is not $k'$-coherent for any $k'\leq k$.
    For $k=3$, we take $\theta_k$ to be the $\PDL$-formula $\dep(p,q) \lor \dep(p,r)$.
    Clearly, every team $T$ with $3$ elements satisfies $\theta_3$, but the team $\Rel(T) = \{(0,0,0), (0,0,1), (0,1,0), (0,1,1)\}$, which has $4$ elements, does not satisfy $\theta_3$.
\end{example}
More generally, it can be shown (cf.\ Fröhlich~et~al.~\cite[Thm. 25 (1.)]{FKM2026}) that the formula
\[
    \dep(p_1,\ldots,p_m,  q_1\ldots,q_m) \lor \dep(q_1,\ldots,q_m,  p_1,\ldots,p_m)
\]
is not $2^{m+1}$-coherent, where $\dep(p_1,\ldots,p_m,  q_1\ldots,q_m)$ abbreviates $\bigwedge_{i=1}^m \dep(p_1,\ldots,p_m, q_i)$.

The next result establishes membership of \kCOHERENCE for $\PDL$ in $\PiE{2}$, that is, in the second level of the exponential-time hierarchy (Mocas \cite{DBLP:journals/tcs/Mocas96}).
Formally, $\PiE{2}$ is defined as $\coNEXPTIME^{\NP}$, i.e., the complement of the class of languages decidable by a non-deterministic Turing machine with exponential time and $\NP$-oracle.
\begin{theorem} \label{thm:coh-mem}
    For every fixed $k \geq 2$, the \kCOHERENCE problem for $\PDL$ is in $\PiE{2}$.
\end{theorem}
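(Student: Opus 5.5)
We show that the complement of \kCOHERENCE for $\PDL$ lies in $\Sigma^{\mathsf{EXP}}_2 = \NEXPTIME^{\NP}$. Fix $k\geq 2$. By downward closure, a $\PDL$-formula $\varphi(p_1,\ldots,p_n)$ fails to be $k$-coherent exactly when some propositional team $T$ over $\{p_1,\ldots,p_n\}$ has two properties: $T\not\models\varphi$, and $S\models\varphi$ for every $k$-element subteam $S\subseteq T$. The nondeterministic exponential-time machine will guess $T$ and verify both properties with the help of an $\NP$ oracle.

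First, the machine guesses $T$. Such a team has at most $2^n$ assignments, so it can be written down as a list of rows in time $O(n2^n)$, which is exponential in $|\varphi|$.

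Second, the machine checks $T\not\models\varphi$ with a single oracle query. Model checking $T\models\varphi$ for $\PDL$ is in $\NP$ in the size of $(T,\varphi)$. The witness is a labelling that records, for every disjunction node, the split of the current subteam into two parts. This labelling has polynomial size in $|T|\cdot|\varphi|$, and checking the atoms at the leaves is polynomial. So the machine writes the exponentially long query $(T,\varphi)$ on its oracle tape and negates the answer. This is allowed, because the base machine runs in exponential time and the oracle then runs in time polynomial in the query length.

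Third, the machine checks that every $k$-subteam satisfies $\varphi$. Because $k$ is fixed, $T$ has at most $\binom{2^n}{k}\leq 2^{nk}$ subteams of size $k$, which is still exponential. The machine enumerates them deterministically and, for each subteam $S$, asks the oracle whether $S\models\varphi$. Each query has size $O(kn+|\varphi|)$. In fact each such query could even be decided by brute force in time $2^{O(k|\varphi|)}$, since a team of size $k$ has only $2^k$ ways to be split at each disjunction. Either way, this step runs within exponential time.

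Altogether this gives a $\NEXPTIME^{\NP}$ algorithm for non-$k$-coherence, so \kCOHERENCE for $\PDL$ is in $\coNEXPTIME^{\NP}=\PiE{2}$. The point that needs care is the second step: $T$ may be exponentially large, and we need the convention that an oracle query written on the tape is answered with oracle complexity measured in the query length. With that convention the $\NP$ membership of $\PDL$ model checking, witnessed by the split certificate described above, gives what we need. The only other thing to state explicitly is the equivalence used at the start: by downward closure, only the direction from the $k$-subteams to $T$ can fail.
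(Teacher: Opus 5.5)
Your proposal is correct and follows the paper's proof. Like the paper, you guess a team $T$ with at most $2^n$ assignments, use an $\NP$ oracle for $\PDL$ model checking to confirm $T\not\models\varphi$, and run through the at most $2^{nk}$ subteams of size $k$, which places \kCOHERENCE in $\coNEXPTIME^{\NP}$; you are simply more explicit than the paper about the exponentially long oracle queries.

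One difference: the paper only lets teams with $|T|>k$ count as counterexamples. Your characterization instead matches the literal definition, under which a team with fewer than $k$ elements and $T\not\models\varphi$ is a vacuous counterexample. Adding the size check $|T|>k$ to your guessed team aligns you with the paper's intended reading and does not affect the complexity bound.
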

\begin{proof}
    Suppose $\varphi \in \PDL$ contains $n$ variables.
    An algorithm deciding \kCOHERENCE for $\varphi$ first universally guesses a propositional team $T$ over $n$ variables.
    % \miika{Universally guesses?}
    This team is of size $|T| \leq 2^n$.
    If $|T| > k$ and $T \not\models \varphi$, then the algorithm searches through all $S \subseteq T, |S| = k$ and accept if and only if there is a $S \not\models \varphi$.
    There are at most $2^{nk}$ many subteams of size $k$.
    These steps can be done in $\coNEXPTIME$, except for the model checking of $T$ and for the subteams.
    Ebbing and Lohmann showed model checking for $\PDL$ to be $\NP$-complete~\cite{DBLP:conf/sofsem/EbbingL12}, thus the algorithm requires an $\NP$-oracle and the problem is in $\PiE{2}$.
\end{proof}

Before proceeding further, we recall the notions of polynomial-time many-one reductions and polynomial-time truth-table reductions. Every decision problem $P$ can be identified with its \emph{characteristic function $\chi_P$}, where for every instance $I$ of  $P$, we have that
\[
    \chi_P(I)=
    \begin{cases}
    1, & \text{if $I$ is a ``yes'' instance of $P$, and}\\
    0, & \text{if $I$ is a ``no'' instance of $P$.}
    \end{cases}
\]
\begin{definition}
    Let $P$ and $Q$ be two decision problems.
    \begin{itemize}
    \item We say that \emph{$R$ is polynomial-time many-one reducible to $Q$}, denoted
    $P\preceq^p_m Q$, if there is polynomial-time algorithm $A$ such that for every input $I$ of $P$, the algorithm $A$ produces an input $J$ of $Q$ such that $\chi_P(I) =1 $ if and only if $\chi_Q(J)= 1$.
    \item We say that \emph{$R$ is polynomial-time truth-table reducible to $Q$}, denoted $P \preceq^p_{tt} Q$, if there is polynomial-time algorithm $A$ such that for every input $I$ of $P$, the algorithm $A$ produces a Boolean formula $\theta(z_1,\ldots,z_m)$ and inputs $J_1,\ldots,J_m$ of $Q$ such that $\chi_P(I) =1 $ if and only if $\theta(\chi_Q(J_1),\ldots,\chi_Q(J_m))=1$.
    \item We say that \emph{$R$ is polynomial-time bounded-truth-table reducible to $Q$}, denoted $P \leqbtt Q$, if there is polynomial-time algorithm $A$ and a Boolean formula $\theta(z_1,\ldots,z_m)$ such that for every input $I$ of $P$, the algorithm $A$ produces inputs $J_1,\ldots,J_m$ of $Q$ such that $\chi_P(I) = 1$ if and only if $\theta(\chi_Q(J_1),\ldots,\chi_Q(J_m)) = 1$.
    \end{itemize}
\end{definition}
Clearly, $P \preceq^p_m Q$ implies $P \leqbtt Q$, while $P\leqbtt Q$ implies $P\preceq^p_{tt} Q$.

We now show that \kCOHERENCE for $\PDL$ is $\NEXPTIME$-hard, for each fixed $k\geq 2$.

\begin{theorem}\label{thm:coh-nexptime-hard}
    For every fixed $k\geq 2$, the \kCOHERENCE problem for $\PDL$ is $\NEXPTIME$-hard via polynomial-time many-one reductions.
\end{theorem}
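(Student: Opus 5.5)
The plan is to adapt the padding trick from the proofs of Theorems~\ref{thm:coh-non-arithm} and~\ref{thm:coh-und-finite} to the propositional setting. I will reduce from the validity problem for $\PDL$, which Virtema~\cite{DBLP:journals/iandc/Virtema17} showed to be $\NEXPTIME$-complete. Here a $\PDL$-formula $\varphi$ is \emph{valid} if every propositional team over its variables satisfies it.

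\textbf{The gadget.} Fix $k\geq 2$ and let $m=\lceil\log_2(k+1)\rceil$, a constant depending only on $k$. Given $\varphi$ with variables $p_1,\ldots,p_n$, choose fresh variables $r_1,\ldots,r_m$. Write $\dep(\vec r)$ for $\bigwedge_{j=1}^m \dep(r_j)$, where $\dep(r_j)$ is the constancy atom (a dependence atom with empty $\vec p$). Set
\[
\gamma_k \coloneqq \bigvee_{i=1}^k \dep(\vec r),\qquad \psi \coloneqq \varphi\lor\gamma_k .
\]
A team satisfies $\gamma_k$ if and only if it contains at most $k$ distinct $\vec r$-values. Since $2^m\geq k+1$, there are teams with $k+1$ distinct $\vec r$-values, so $\gamma_k$ is not $k$-coherent. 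Clearly $\psi$ is computable in polynomial time from $\varphi$.

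\textbf{Correctness of the reduction.}
\begin{itemize}
\item If $\varphi$ is valid, then $\psi$ is valid: put the whole team into the $\varphi$-disjunct and the empty team into $\gamma_k$. A valid formula is trivially $k$-coherent.
\item If $\varphi$ is not valid, pick a team $T_0$ over $p_1,\ldots,p_n$ with $T_0\not\models\varphi$. Fix $k+1$ distinct vectors $v_1,\ldots,v_{k+1}\in\{0,1\}^m$, and let $T$ be the product team $T_0\times\{v_1,\ldots,v_{k+1}\}$.
\begin{itemize}
\item Every $k$-element subteam $S\subseteq T$ has at most $k$ distinct $\vec r$-values, so $S\models\gamma_k$. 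Putting the empty team into the $\varphi$-disjunct, which is allowed by the empty team property, gives $S\models\psi$.
\item Suppose $T=T_1\cup T_2$ with $T_1\models\varphi$ and $T_2\models\gamma_k$. Then $T_2$ misses some $v_j$, so $T_1\supseteq T_0\times\{v_j\}$. By locality, $T_0\times\{v_j\}\models\varphi$ would give $T_0\models\varphi$, which is false. By downward closure, $T_1\not\models\varphi$, a contradiction. Hence $T\not\models\psi$, and $\psi$ is not $k$-coherent.
\end{itemize}
\end{itemize}
So $\varphi$ is valid if and only if $\psi$ is $k$-coherent, which gives a polynomial-time many-one reduction.

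\textbf{Main obstacle.} The delicate step is choosing a non-$k$-coherent gadget that works over the Boolean domain. A disjunction of single constancy atoms $\dep(r)$ is valid when $k\geq 2$, because $r$ takes only two values. This is why the gadget uses vectors of $m$ fresh variables.

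Beyond that, the proof relies on three standard properties of $\PDL$: locality, downward closure, and the empty team property. Each should be stated explicitly, since they carry the argument in the second case above.
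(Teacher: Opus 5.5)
Your proof is correct and follows essentially the same route as the paper. You reduce from $\PDL$-validity (Virtema) by disjoining $\varphi$ with a non-$k$-coherent formula on fresh variables, and you use a product team as the witness when $\varphi$ is not valid. The only difference is that you construct the gadget explicitly as $\bigvee_{i=1}^k \dep(\vec r)$ with $\lceil\log_2(k+1)\rceil$ fresh variables, whereas the paper takes an arbitrary non-$k$-coherent $\theta_k$; your version is slightly more self-contained and makes visible that the gadget has constant size.
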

\begin{proof}
    Fix a positive integer $k\geq 2$.
    Virtema~\cite{DBLP:journals/iandc/Virtema17} showed that the validity problem for $\PDL$ is $\NEXPTIME$-hard via polynomial-time reductions.
    We will show that the validity problem for $\PDL$ is polynomial-time reducible to the \kCOHERENCE problem for $\PDL$.

    Let $\theta_k$ be a $\PDL$-formula that is not $k$-coherent and let $q_1,\ldots,q_m$ be the variables of $\theta_k$.
    Let $\varphi(p_1,\ldots,p_n)$ be a $\PDL$-formula which we want to determine whether or not it valid.
    Without loss of generality, we may assume that the variables $p_1,\ldots,p_n$ of $\varphi$ are different from the variables $q_1,\ldots,q_m$ of $\theta_k$.
    Construct the $\PDL$-formula $\varphi \vee \theta_k$.
    We claim that $\varphi$ is valid if and only if $\varphi \lor \theta_k $ is $k$-coherent.

    Assume that $\varphi$ is valid.
    If $T$ is a team with domain $\{p_1,\ldots,p_n, q_1,\ldots,q_m\}$, then $T$  satisfies $\varphi$, hence $T$ satisfies $\varphi \vee \theta_k$.
    Therefore, $\varphi \vee \theta_k$ is valid, hence it is $k$-coherent (since it is $1$-coherent).

    Next, assume that $\varphi$ is not valid.
    Hence, there is a non-empty team $S$ with domain $\{p_1,\ldots,p_n\}$ that does not satisfy $\varphi$.
    Since $\theta_k$ is not $k$-coherent, there is a non-empty team
    $T$ with domain $\{q_1,\ldots,q_m\}$ that witnesses this fact, i.e., $T\not \models \theta_k$ but if $T'\subseteq T$ is a subteam of $T$ of size at most $k$, then $T'\models \theta_k$.
    We will show that the team $S\times T$ witnesses that $\varphi \lor \theta_k$ is not $k$-coherent.
    First, we show that $S \times T \not\models \varphi \vee \theta_k$.
    For this, we overload the notation for restrictions as follows: if $U \subseteq V$, then the \emph{restriction of $T$ on $U$} is the team $T\upharpoonright U = \{\,s\upharpoonright U \mid s\in T\,\}$, where $s\upharpoonright U$ is the restriction of the assignment $s$ to $U$.
    Assume that $Y$ and $Z$ are subteams of $S\times T$ such that $Y\cup Z=S\times T$.
    We now distinguish two cases: if $Z \upharpoonright \{q_1,\ldots,q_m\} = T$, we have that $Z \not\models \theta_k$; if $Z\upharpoonright \{q_1,\ldots,q_m\} \subsetneq T$, then $Y\upharpoonright \{p_1,\ldots,p_n\} = S$ and so we have that $Y\not \models \varphi$.
    Therefore, $S\times T \not \models \varphi \vee \theta_k$.
    Next, assume that $W$ is a subteam of $S\times T$ of size at most $k$.
    Split the team $W$ to $Y=\emptyset$ and $Z = W$. Then $Y\models \varphi$ and $Z\models \theta_k$, since $Z\upharpoonright \{q_1,\ldots,q_n\}$ is a subteam of $T$ of size at most $k$.
\end{proof}

Our next goal is to improve the lower bound in the preceding Theorem \ref{thm:coh-nexptime-hard}. For this, we will exhibit a reduction from the entailment problem in propositional dependence logic, which was shown to be $\PiE{2}$-complete by Hannula~\cite{Hannula19}.

% \commentout{
% \begin{proposition} \label{prop:coher-conj}
% Let $k\geq 2$ and let $\varphi$
% of propositional dependence logic $\PDL$.  If $\varphi$ is $k$-coherent and $\varphi \models \psi$, then the conjunction $\varphi \wedge \psi$ is $k$-coherent.
% \end{proposition}
% \begin{proof}
% %We give the proof for the case in which $\varphi$
% %and $\psi$ are $\PDL$-formulas. The case for $\DL$-formulas is entirely analogous.
% Let $T$ be a team such that for every subteam $T'\subseteq T$ with $|T'|\leq k$, we have that $T'\models \varphi \wedge \psi$. We must show that $T\models \varphi \wedge \psi$. By the semantics of conjunction of $\DL$-formulas, we have that
% $T'\models \varphi$, for every subteam $T'$ of $T$ with
% $|T'|\leq k$. Since $\varphi$ is $k$-coherent, we have that
% $T\models \varphi$. Since $\varphi \models \psi$, we have that
% $T\models \psi$, hence $T\models \varphi \wedge \psi$, which was to be shown.
% \end{proof}
% }

\begin{definition}
    Let $k$ be a positive integer and let $\varphi$ and $\psi$ be two $\PDL$-formulas.
    We say that $\varphi$ \emph{$k$-entails $\psi$}, denoted $\varphi\models_k \psi$, if for every team $T$ of size at most $k$ that satisfies $\varphi$, it holds that $T$ also satisfies $\psi$.
\end{definition}

\begin{proposition} \label{prop:log-impl}
    Let $k$ be a positive integer, $\varphi$ be a $k$-coherent $\PDL$-formula, and $\psi$ be a $\PDL$-formula.
    Then the following statements are equivalent:
    \begin{enumerate}
    \item $\varphi \models \psi$.
    \item $\varphi \wedge \psi$ is $k$-coherent and $\varphi \models_k \psi$.
    \end{enumerate}
\end{proposition}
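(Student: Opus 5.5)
We prove the two directions separately. The only tools needed are the semantics of conjunction, the $k$-coherence of $\varphi$, and downward closure of $\PDL$ (the propositional analogue of Väänänen's Prop.~3.10). Throughout we use that, by downward closure, ``every $k$-element subteam satisfies $\chi$'' is equivalent to ``every subteam of size at most $k$ satisfies $\chi$'' whenever $|T|\geq k$. For $|T|<k$ there is nothing to check: such a team has no $k$-element subteam, and the equivalence in Definition~\ref{def:prop-coh} is read with ``subteams of size at most $k$''. We adopt this reading consistently, as the paper does in the proof of Proposition~\ref{prop:strong coherence implies coherence}.

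\emph{Direction 1 $\Rightarrow$ 2.} Assume $\varphi\models\psi$. Then $\varphi\models_k\psi$ is immediate, since it is a special case restricted to teams of size at most $k$. For $k$-coherence of $\varphi\wedge\psi$, the implication from $T\models\varphi\wedge\psi$ to all small subteams satisfying it is downward closure. Conversely, suppose every subteam $S\subseteq T$ with $|S|\leq k$ satisfies $\varphi\wedge\psi$. In particular every such $S$ satisfies $\varphi$. Since $\varphi$ is $k$-coherent, $T\models\varphi$. Since $\varphi\models\psi$, also $T\models\psi$, so $T\models\varphi\wedge\psi$.

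\emph{Direction 2 $\Rightarrow$ 1.} Assume $\varphi\wedge\psi$ is $k$-coherent and $\varphi\models_k\psi$, and let $T\models\varphi$; we must show $T\models\psi$. Take any subteam $S\subseteq T$ with $|S|\leq k$. By downward closure $S\models\varphi$, and since $|S|\leq k$, $\varphi\models_k\psi$ gives $S\models\psi$. Hence $S\models\varphi\wedge\psi$ for all such $S$, so $k$-coherence of $\varphi\wedge\psi$ yields $T\models\varphi\wedge\psi$, and in particular $T\models\psi$. This direction does not use the hypothesis that $\varphi$ is $k$-coherent; that hypothesis is needed only in the forward direction.

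There is no real obstacle. The only point needing care is the bookkeeping between ``$k$-element subteams'' in the definition of coherence and ``teams of size at most $k$'' in the definition of $\models_k$. Downward closure handles this when $|T|\ge k$, and for $|T|<k$ we use the convention fixed above. We will state this remark explicitly at the start of the proof.
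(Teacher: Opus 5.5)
Your proof is correct and follows the paper's argument. The backward direction matches the paper step for step: downward closure, then $\varphi\models_k\psi$, then $k$-coherence of $\varphi\wedge\psi$. In the forward direction you unfold the paper's observation that $\varphi\models\psi$ makes $\varphi\wedge\psi$ logically equivalent to $\varphi$, hence $k$-coherent; your remark that the hypothesis on $\varphi$ is used only in that direction is accurate.
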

\begin{proof}
    The implication ``\itemstyle{1.} $\Longrightarrow$ \itemstyle{2.}'' follows from the following two facts: (i) if $\varphi\models \psi$, then $\varphi\wedge \psi$ is logically equivalent to $\varphi$, hence $\varphi\land\psi$ is $k$-coherent; (ii) if $\varphi \models \psi$, then $\varphi\models_k \psi$, for every $k\geq 1$.

    Towards the implication ``\itemstyle{2.} $\Longrightarrow$ \itemstyle{1.}'', assume that $\varphi \wedge \psi$ is $k$-coherent and that $\varphi\models_k \psi$.
    Let $T$ be a team such that $T\models \varphi$.
    We must show that $T\models \psi$.
    By downward closure, we have that $S \models \varphi$ holds, for every subteam $S \subseteq T$ with $|S|\leq k$.
    Since $\varphi \models_k \psi$, it follows that $S\models \psi$ holds, for every subteam $S\subseteq T$ with $|S|\leq k$.
    Therefore, we have that $S\models \varphi \wedge \psi$ holds, for every subteam $S\subseteq T$ with $|S|\leq k$.
    Since $\varphi \wedge \psi$ is $k$-coherent, it follows that $T\models \varphi \wedge \psi$, hence $T\models \psi$, which was to be shown.
\end{proof}

\begin{theorem}\label{thm:coh-hard}
    For every fixed $k\geq 2$, the \kCOHERENCE problem for $\PDL$ is $\PiE{2}$-complete via  polynomial-time bounded-truth-table reductions.
\end{theorem}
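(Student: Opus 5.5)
Membership in $\PiE{2}$ is Theorem~\ref{thm:coh-mem}, so only hardness remains. The plan is to give a polynomial-time bounded-truth-table reduction from the entailment problem for $\PDL$, which is $\PiE{2}$-complete by Hannula~\cite{Hannula19}, to \kCOHERENCE, using Proposition~\ref{prop:log-impl}. That proposition says that if $\varphi$ is $k$-coherent, then $\varphi\models\psi$ holds iff $\varphi\wedge\psi$ is $k$-coherent and $\varphi\models_k\psi$. The work is to arrange that the hypothesis ``$\varphi$ is $k$-coherent'' can be assumed, and to handle the side condition $\varphi\models_k\psi$.

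\textbf{Step 1 (make the premise coherent).} Given an entailment instance $(\varphi,\psi)$, I would first replace $\varphi$ by an equivalent-for-entailment formula that is $k$-coherent. The natural idea is to move all the team-level difficulty to the right. For instance, I would reduce from the restricted entailment problem in which the left-hand side is a conjunction of literals and dependence atoms, or even a flat formula. Such a formula is $2$-coherent, since dependence atoms and literals are $2$-coherent and conjunction preserves $k$-coherence. Hence it is $k$-coherent for every $k\geq 2$, which explains the restriction $k\geq 2$ in the statement. I would check whether Hannula's $\PiE{2}$-hardness proof already produces left-hand sides of this shape; it reduces from a $\Pi_2$-type succinct problem, where the premise typically encodes dependencies. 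If it does not, I would try to express $\varphi\models\psi$ as $\top\models\varphi\to\psi$ in some form, or as validity of a formula over a coherent premise with fresh variables. Alternatively, I would use $\varphi\vee$-tricks as in Theorem~\ref{thm:coh-nexptime-hard}, building $(\varphi\wedge\psi)$ with a product-team argument. I expect this to be the main obstacle, because it requires inspecting the hardness proof of~\cite{Hannula19}.

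\textbf{Step 2 (the $k$-entailment query).} The condition $\varphi\models_k\psi$ for fixed $k$ only concerns teams of size at most $k$. It can be checked in $\coNP$: universally guess $\le k$ assignments, then model-check each of $\varphi$ and $\psi$ on a team of constant size, which takes polynomial time since the splits of a constant-size team are constantly many per disjunction and evaluation is bottom-up. A $\coNP$ predicate can be many-one reduced to \kCOHERENCE, since \kCOHERENCE is $\NEXPTIME$-hard and thus $\coNP$-hard (e.g.\ through a validity-style construction as in Theorem~\ref{thm:coh-nexptime-hard}). Alternatively, the reduction may simply output a fixed yes- or no-instance after computing the answer, but this is not possible in polynomial time. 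So I would use a second query $J_2$ that is $k$-coherent iff $\varphi\models_k\psi$, obtained via a $\coNP$-to-\kCOHERENCE many-one reduction, for example from propositional validity encoded as a flat formula.

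\textbf{Step 3 (assemble).} The bounded-truth-table reduction outputs $J_1=\varphi\wedge\psi$ and $J_2$ as above, with the fixed Boolean formula $z_1\wedge z_2$. Correctness follows from Proposition~\ref{prop:log-impl} and Step 1. Combining this with Theorem~\ref{thm:coh-mem} yields $\PiE{2}$-completeness.
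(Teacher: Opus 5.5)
Your plan follows the paper's proof. The paper reduces from Hannula's entailment problem restricted to premises that are conjunctions of dependence atoms, applies Proposition~\ref{prop:log-impl} to that $2$-coherent premise, obtains a second query for $\varphi\models_k\psi$ from the $\NEXPTIME$-hardness in Theorem~\ref{thm:coh-nexptime-hard}, and combines the two answers with $z_1\wedge z_2$.

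Hardness already holds for that restriction, so your Step~1 is settled by citing Hannula and your fallbacks are unnecessary. A flat premise would not do: then $\varphi\models\psi$ is equivalent to validity of $\psi$ disjoined with the dual of $\varphi$, which lies in $\NEXPTIME$.

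Your remark that $k$-entailment is even in $\mathsf{coNP}$, because model checking on constant-size teams is polynomial, sharpens the paper's $\Pi_2^{\mathsf P}$ bound but plays the same role.
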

\begin{proof}
    Fix a $k\geq 2$.
    Membership of \kCOHERENCE for $\PDL$ in $\PiE{2}$ follows from the preceding Theorem \ref{thm:coh-mem} as $k$ is fixed.
    For $\PiE{2}$-hardness, we will use Theorem \ref{thm:coh-nexptime-hard} and a result from Hannula~\cite{Hannula19} concerning the entailment problem for $\PDL$.
    Specifically, as shown by Hannula~\cite{Hannula19}, the following problem is $\PiE{2}$-hard via polynomial-time reductions:
    given a finite conjunction $\varphi$ of dependence atoms and a $\PDL$-formula $\psi$, does $\varphi \models \psi$?
    Since every dependence atoms is $2$-coherent and since $2$-coherence is preserved under conjunctions, we have that $\varphi$ is a $2$-coherent $\PDL$-formula, hence it is also $k$-coherent for every $k\geq 3$.
    Thus, by Proposition \ref{prop:log-impl}, if $\varphi$ is a finite conjunction of dependence atoms and $\psi$ is a $\PDL$-formula, then
    \[
        \varphi\models \psi\text{ if and only if }\varphi\land \psi\text{ is $k$-coherent and }\varphi\models_k \psi.
    \]
    We now show that, for every fixed $k\geq 2$, the $k$-entailment problem is in $\Pi_2^{\mathsf P}$.
    Indeed, by definition, we have that $\varphi \models_k \psi$ if and only if every team $T$ of size at most $k$ that satisfies $\varphi$ also satisfies $\psi$.
    Hence, to verify that $\varphi \not \models_k \psi$, we guess a team $T$ of size at most $k$ (which is a polynomial-sized guess in the size of $\varphi$ and $\psi$) and then check that $T\models\varphi$ but $T\not \models \psi$.
    Since the model-checking problem for $\PDL$ in $\NP$, it follows that $k$-entailment is in $\Pi_2^{\mathsf P}$.

    The preceding analysis shows that the entailment problem for $\PDL$ has a polynomial-time reduction to the conjunction of the \kCOHERENCE problem for $\PDL$ with a problem in $\Pi_2^{\mathsf P}$.
    Since $\Pi_2^{\mathsf P}\subseteq \NEXPTIME$ and since, by Theorem \ref{thm:coh-nexptime-hard}, the \kCOHERENCE problem for $\PDL$ is $\NEXPTIME$-hard via polynomial-time reductions, we have that the entailment problem for $\PDL$ has a polynomial-time truth table reduction to the \kCOHERENCE problem.
    %We  obtain the following
    %$\Pi_2^{\mathsf P}$-reduction of the entailment problem $\varphi\models \psi$ for $\PDL$, where $\varphi$ is a finite conjunction of dependence atoms, to the \kCOHERENCE problem for $\PDL$:
    %\begin{enumerate}
    %\item First, use a $\Pi^{\mathsf P}_2$-check to determine whether $\varphi\models_k \psi$. If the answer is ``no'', then report that $\varphi \not \models \psi$. Otherwise,
    %\item Check if $\varphi\land \psi$ is $k$-coherent. If the answer is ``yes'', then report that $\varphi  \models \psi$; otherwise, report that $\varphi \not \models \psi$.
    %\end{enumerate}
\end{proof}
It remains open, whether the preceding result can be strengthened to hold for polynomial-time many-one reductions as well.
One way to prove such a strengthening would be to show that for every fixed $k\geq 2$, there exists a polynomial-time algorithm which, given two $\PDL$-formulas $\phi$ and $\psi$ produces a $\PDL$-formula $\theta$, such that both $\phi$ and $\psi$ are $k$-coherent if and only if $\theta$ is $k$-coherent.

Our lower bound for the complexity of deciding $k$-coherence for $\PDL$-formulas also implies the following result on the minimal size of witnesses of the failure of $k$-coherence.

%\juha{Please check the below argument and the bounds. I cannot find the picure in which we formulated this with Phokion earlier.}

%The lower bound for the complexity of $k$
%As a corollary of Theorem \ref{thm:coh-hard} we obtain the following result.

\begin{corollary}\label{thm:conditional-non-coh}
    Let $k\ge 2$ be a natural number and let $f$ be a polynomial in one variable.
    There is a $\PDL$-formula $\varphi$ such that for every team $T$ witnessing that $\varphi$ is not $k$-coherent, we have that $|T|> f(|\varphi|)$.
    %any team $T$ witnessing that $\varphi$ is not $k$-coherent has size  strictly greater than $f(|\varphi|)$.
    %,unless $\Pi_3^{\mathsf P} =\PiE{2}$.\label{thm:conditional-non-coh}
\end{corollary}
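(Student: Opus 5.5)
We argue by contradiction: polynomially bounded witnesses would put \kCOHERENCE for $\PDL$ low in the polynomial hierarchy, which clashes with Theorem~\ref{thm:coh-hard}. Fix $k\geq 2$ and a polynomial $f$, which we may take monotone with $f(m)\geq k$. Suppose, towards a contradiction, that every $\PDL$-formula $\varphi$ that is not $k$-coherent has a witnessing team $T$ with $|T|\leq f(|\varphi|)$. Here a witness is a team $T$ with $T\not\models\varphi$ such that every $k$-element subteam of $T$ satisfies $\varphi$.

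\emph{Step 1: non-coherence would lie in $\Sigma_2^{\mathsf P}$.} A $\Sigma_2^{\mathsf P}$ procedure existentially guesses a team $T$ of size at most $f(|\varphi|)$, which is a polynomial-size object. It then checks two things.
\begin{enumerate}
\item $T\not\models\varphi$. This is a $\coNP$ check, since model checking for $\PDL$ is $\NP$-complete~\cite{DBLP:conf/sofsem/EbbingL12}.
\item $S\models\varphi$ for each of the at most $\binom{f(|\varphi|)}{k}$ subteams $S\subseteq T$ with $|S|=k$. Since $k$ is fixed, these are polynomially many $\NP$ checks, which can be merged into a single $\NP$ check.
\end{enumerate}
Both checks fit in $\Sigma_2^{\mathsf P}$. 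Hence, under the assumption, \kCOHERENCE for $\PDL$ is in $\Pi_2^{\mathsf P}$.

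\emph{Step 2: transfer the lower bound.} By Theorem~\ref{thm:coh-hard}, every $\PiE{2}$ problem $L$ bounded-truth-table reduces in polynomial time to \kCOHERENCE for $\PDL$. Composing with Step~1, $L$ is decided by a polynomial-time computation that makes a constant number of non-adaptive $\Pi_2^{\mathsf P}$ queries and combines the answers with a fixed Boolean formula. Such a language lies in the Boolean hierarchy over $\Sigma_2^{\mathsf P}$, and hence in $\Delta_3^{\mathsf P}\subseteq \Sigma_3^{\mathsf P}\cap\Pi_3^{\mathsf P}$. So the assumption yields $\PiE{2}\subseteq \Pi_3^{\mathsf P}$.

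\emph{Step 3: derive the contradiction; this is the main obstacle.} Padding gives a collapse. Every $\Pi_3^{\mathsf P}$ language is also a $\PiE{2}$ language, so $\PiE{2}=\Pi_3^{\mathsf P}$. Scaling up by padding, which pushes the exponential hierarchy upward, the exponential hierarchy would then collapse onto a polynomial level. The plan is to rule this out with the alternating-time hierarchy theorem: a constant-alternation class with time bound $2^{n}$ strictly contains the class with the same alternation pattern and time bound $n^{c}$.

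The delicate point is that the two classes have different alternation depths, three versus two. The step I would try first is to show that a $\Pi_3$ computation in polynomial time can be simulated by a $\Pi_2$ computation in exponential time. The innermost polynomial quantifier would be absorbed by deterministic exponential brute force, so that $\Pi_3^{\mathsf P}\subseteq$ a fixed $\Pi_2$-time class $\Pi_2\text{-}\mathrm{TIME}(2^{n^{c}})$. Alternation-preserving time hierarchy plus padding then gives a contradiction.

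If this unconditional separation cannot be completed, the fallback is the conditional statement. The argument above shows exactly that polynomially bounded witnesses imply $\Pi_3^{\mathsf P}=\PiE{2}$. Hence the corollary holds unless $\Pi_3^{\mathsf P}=\PiE{2}$, which appears to be the hypothesis the authors had in mind.
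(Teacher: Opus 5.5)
\textbf{The gap is in Step~3.} Your Steps~1 and~2 are correct. Step~3, however, is not completed, and your fallback proves only the conditional version, which is weaker than the unconditional corollary.

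Because you use Theorem~\ref{thm:coh-hard} only as a black-box bounded-truth-table reduction, you lose an alternation and end up with $\PiE{2}\subseteq\Delta^{\mathsf P}_3\subseteq\Pi^{\mathsf P}_3$. Refuting that would require separating $\PiE{2}$ from a polynomial-time class with one \emph{more} alternation. The alternating time hierarchy theorem does not give this, and as far as I know no unconditional proof of it exists.

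Your proposed lemma does not fill the gap. Brute-forcing the innermost quantifier gives $\Pi_3\text{-}\mathrm{TIME}(n^d)\subseteq\Pi_2\text{-}\mathrm{TIME}(2^{O(n^d)})$, with the exponent growing with $d$. So it does not place $\Pi^{\mathsf P}_3$ inside any fixed $\Pi_2\text{-}\mathrm{TIME}(2^{n^c})$. Correspondingly, pushing a complete problem through the assumed inclusion yields only $\Pi_2\text{-}\mathrm{TIME}(2^{n^d})\subseteq\Pi_3\text{-}\mathrm{TIME}(n^{cd})\subseteq\Pi_2\text{-}\mathrm{TIME}(2^{O(n^{cd})})$, and the exponents never cross.

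\textbf{The missing idea is to open up the reduction instead of treating it as a generic btt-reduction.} By Proposition~\ref{prop:log-impl}, for a conjunction $\varphi$ of dependence atoms, $\varphi\models\psi$ iff $\varphi\wedge\psi$ is $k$-coherent and $\varphi\models_k\psi$.
\begin{itemize}
\item The second condition is in $\Pi^{\mathsf P}_2$ outright, as shown in the proof of Theorem~\ref{thm:coh-hard}.
\item Under your assumption, the first condition is in $\Pi^{\mathsf P}_2$ by your Step~1.
\end{itemize}
Since $\Pi^{\mathsf P}_2$ is closed under intersection and polynomial-time many-one reductions, Hannula's entailment problem lies in $\Pi^{\mathsf P}_2$. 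That problem is $\PiE{2}$-hard under many-one reductions, so $\PiE{2}=\Pi^{\mathsf P}_2$. This contradicts the alternating time hierarchy theorem at a \emph{single} alternation level, which is exactly the same-pattern separation you observed is available.

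This is the paper's route. It asserts the collapse of $\PiE{2}$ to $\Pi^{\mathsf P}_2$ directly from Theorem~\ref{thm:coh-hard}. That step is justified because the truth table there is the conjunction $z_1\wedge z_2$: one query is $k$-coherence of $\varphi\wedge\psi$, and the other is the $\Pi^{\mathsf P}_2$ problem of $k$-entailment routed through Theorem~\ref{thm:coh-nexptime-hard}. Conjunctions of $\Pi^{\mathsf P}_2$ answers stay in $\Pi^{\mathsf P}_2$, whereas the general bounded-truth-table closure only gives the Boolean hierarchy over $\Sigma^{\mathsf P}_2$ that you found.
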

\begin{proof}
%, assume that
%for every $\PDL$-formula $\varphi$, if there is a team $T$ witnessing that $\varphi$ is not $k$-coherent, then there is such  a team $T$ with size $|T|\leq f(|\varphi|)$.
    Towards a contradiction, assume $f(n)=n^r$ is such that for every $\PDL$-formula $\varphi$, if $\varphi$ is not $k$-coherent, then there exists a team $T$
    of size $|T|\leq |\varphi|^r$ such that $T \not\models \varphi$, but for all subteams $S\subseteq T$ such that $|S|= k$ it holds that $S\models \varphi$. Therefore, the following statements are equivalent:
    \begin{itemize}
        \item $\varphi$ is $k$-coherent,
        \item for all teams $T$ with $|T|\leq |\varphi|^r$, we have
    that {if every } $S \subseteq T$
    { with } $|S|\leq k$
    { satisfies } $S \models \varphi$,
    { then }$T \models \varphi$.
    \end{itemize}
    %\miika{Above, 2 should be k?} % probably yes OK
    Since the model checking problem for $\PDL$ is in $\NP$,
    this implies \kCOHERENCE is in $\Pi^{\mathsf P}_2$.
    By Theorem~\ref{thm:coh-hard}, it then follows that  $\PiE{2}$ collapses to $\Pi^{\mathsf P}_2$; this contradicts the fact that
    %alternating time hierarchy theorem which asserts that
    $\Pi^{\mathsf P}_2$ is included properly in $\PiE{2}$ (see Baker~et~al.~\cite{doi:10.1137/0204037} and Mocas~\cite{DBLP:journals/tcs/Mocas96}).
    %$f(n)=n^r$ satisfies the claim for all formulas $\varphi$. Then it is easy to see that the argument  of Theorem \ref{thm:coh-mem} gives a $\Pi_3^{\mathsf P}$ upper bound for $k$-coherence yielding the collapse due to the lower bound of Theorem~\ref{thm:coh-hard}.
    %\miika{Find a reference for the above statement, if exists.}
\end{proof}
Another consequence of Theorem \ref{thm:coh-hard} is a lower bound for \kCOHERENCE of quantifier-free dependence logic formulas.
\begin{corollary}
    For every fixed $k\geq 2$, the \kCOHERENCE problem for $\DLQF$ is $\PiE{2}$-hard via  polynomial-time bounded-truth-table reductions.
\end{corollary}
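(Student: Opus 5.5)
The plan is to give a polynomial-time many-one reduction from \kCOHERENCE for $\PDL$ to \kCOHERENCE for $\DLQF$. Composing it with the bounded-truth-table reduction of Theorem~\ref{thm:coh-hard} gives a bounded-truth-table reduction from a $\PiE{2}$-hard problem to \kCOHERENCE for $\DLQF$. Since $\DLQF$-formulas were introduced over a relational vocabulary, we can use a unary relation symbol $P$ to simulate truth values. A propositional variable $p$ is read as a first-order variable $x_p$. The literal $p$ becomes $P(x_p)$, the literal $\neg p$ becomes $\neg P(x_p)$, and $\dep(\vec p,q)$ becomes $\dep(\vec x_p,x_q)$. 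The connectives $\wedge,\vee$ are kept unchanged. Call the result $\varphi^\circ$.

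The difficulty is that dependence atoms over $\vec x_p$ see the actual elements, not merely whether they lie in $P$. To fix this, we conjoin a constraint forcing each variable to range over only two elements, one in $P$ and one outside it. Concretely, I would set $\varphi' := \varphi^\circ \wedge \bigwedge_{p} \dep(P\text{-class of }x_p)$. Since constancy of the element among those in $P$ is not directly expressible, I would instead use the simpler and cleaner route of adding the constants $0,1$ (allowed, since vocabularies contain constant symbols). Then $p$ becomes $x_p = 1$ and $\neg p$ becomes $x_p=0$, and we set $\varphi' := \varphi^\circ \vee \psi_{\mathrm{bad}}$ with $\psi_{\mathrm{bad}}$ absorbing assignments taking values outside $\{0,1\}$, or alternatively restrict attention via a conjunct $\bigwedge_p (x_p=0 \vee x_p=1)$. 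The conjunct version is preferable. Let $\varphi' := \varphi^\circ \wedge \bigwedge_p (x_p=0 \vee x_p = 1)$, where the disjunction is first-order, so it is flat and forces every assignment in the team to be $\{0,1\}$-valued, given $0\neq 1$ in the structure.

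The key steps are as follows. First, show that if $0^\calA\neq 1^\calA$ and $T$ is $\{0^\calA,1^\calA\}$-valued, then $\calA\models_T\varphi'$ if and only if the corresponding propositional team satisfies $\varphi$; this is a routine induction. Second, handle teams containing an assignment with a value outside $\{0,1\}$. Such a team fails the flat conjunct, and so does the singleton subteam containing that assignment, so such teams are never counterexamples to $k$-coherence. Third, handle structures with $0^\calA=1^\calA$. Here both $x_p=0$ and $x_p=1$ hold, so the translation breaks. I expect this to be the main obstacle. I would fix it by adding the conjunct $\neg(0=1)$ in the form $\forall$-free $\neg y_0 = y_1$-style atoms applied to constants, namely the literal $\neg\, 0 = 1$. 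This literal is allowed in the grammar since the arguments of an equality may be constants. On such structures every nonempty team then fails the formula, and so does every singleton, so no counterexample arises there.

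With these conjuncts in place, $\varphi$ is $k$-coherent exactly when $\varphi'$ is, because every counterexample for $\varphi'$ lives in a two-element-valued situation mirroring a propositional counterexample, and conversely. The translation is clearly polynomial, which completes the reduction and hence the corollary.
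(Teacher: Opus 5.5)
Your final construction $\varphi^\circ\wedge\bigwedge_p(x_p=0\vee x_p=1)\wedge\neg\,0=1$ is exactly the paper's formula $\psi_\varphi$. Your argument is the same as the paper's: a many-one reduction from \kCOHERENCE for $\PDL$, composed with Theorem~\ref{thm:coh-hard}. Your case analysis also correctly fills in the step the paper dismisses as ``clearly'': structures with $0^\calA=1^\calA$ and teams that are not $\{0^\calA,1^\calA\}$-valued are already refuted by a singleton subteam, so they never yield a counterexample to $k$-coherence.
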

\begin{proof}
We construct a polynomial-time many-one reduction from the \kCOHERENCE problem of $\PDL$ to that of $\DLQF$.
Consider two constant symbols $0$ and $1$.
From a $\PDL$-formula $\varphi(p_1, \dots ,p_n)$, construct a $\DLQF$-formula
%\[
%(\varphi^* \land \lnot 0 = 1 )\lor \lnot p_1 = 0 \lor \lnot p_1 = 1 \lor \dots \lor \lnot %p_n = 0 \lor  \lnot p_n = 1,
%\]
\[
\psi_\varphi\coloneqq \varphi^* \land (\bigwedge_{i=1}^n x_{p_i}=1 \lor x_{p_i}=0) \land\lnot\, 0=1,
\]
where $ \varphi^*$ is obtained from $\varphi$ by replacing every occurrence of a propositional variable $p_i$ with with  $x_{p_i}=1$, and
every occurrence of a negated propositional variable $\lnot p_i$ with  $x_{p_i}=0$.

Let $\calA$ be a first-order model and $T$ a team. Clearly, if $T$ is not a propositional team, then $\calA \not\models_T \psi_\varphi$. If $T$ is a propositional team, then $\calA \models_T \psi_\varphi$ if and only if $\calA\upharpoonright \{0,1\} \models_T \psi_\varphi$ if and only if $T\models \varphi$. Therefore, we have that $\phi$ is $k$-coherent if and only if $\psi$ is $k$-coherent.
\end{proof}

\section{Outlook}
In this article, we have studied structural and algorithmic aspects of coherence of dependence logic formulas.
The following questions warrant further study:
\begin{itemize}
    \item Is the \kCOHERENCE problem of quantifier-free dependence logic formulas decidable?
    \item Is it possible to replace the truth-table reduction used in the proof of Theorem~\ref{thm:coh-hard} by a Karp-reduction?
    % \item What are the structural and algorithmic aspects of coherence for other team logics?
\end{itemize}
Beyond these questions it is worth exploring the structural and algorithmic aspects of coherence for other team logics, such as the extension of dependence logic with intuitionistic implication~\cite{DBLP:journals/sLogica/Yang13}.

\bibliography{references}

\end{document}